\newtheorem{thm}{Theorem}
\newtheorem{defn}[thm]{Definition}
\newtheorem{lem}[thm]{Lemma}
\newcommand{\tuple}[1]{\langle #1 \rangle}
\begin{document}

\titlerunning{Compositional Verification of Evolving SPL}

\title{Compositional Verification of Evolving Software Product Lines}

\authorrunning{J-V Millo, S Ramesh, S Krishna, Ganesh K}

\author{Jean-Vivien Millo\inst{1}\inst{2} and S Ramesh\inst{2} and Shankara Narayanan Krishna\inst{3} and Ganesh Khandu Narwane\inst{4}}

\institute{EPI AOSTE, INRIA Sophia-Antipolis, France
\and Global General Motors R\&D, TCI Bangalore, India
\and Department of CSE, IIT Bombay, Mumbai, India
\and Homi Bhabha National Institute, Mumbai, India}

\maketitle

\begin{abstract}
This paper presents a novel approach to the design verification of Software Product Lines(SPL). 
The proposed approach assumes that the requirements and designs are modeled as finite state machines
with variability information. The variability information at the requirement and design levels are 
expressed differently and at different levels of abstraction. Also the proposed approach supports
verification of SPL in which new features and variability may be added incrementally. 
Given the design and requirements of an SPL, the proposed design verification method ensures that 
every product at the design level behaviorally conforms to a product at the requirement level. 
The conformance procedure is compositional in the sense that the verification of an entire SPL 
consisting of multiple features is reduced to the verification of the individual features. 
The method has been implemented and demonstrated in a prototype tool SPLEnD (SPL Engine for Design Verification) 
on a couple of fairly large case studies.
\end{abstract}





\section{Introduction}
\label{sec-intro}
Large industrial software systems are often developed as 
{\em Software Product Line} (SPL) with a common core set of features which are developed once and reused across 
all the products.
The products in an SPL differ on a small set of features which are specified using {\em variation points}.  
The focus of this paper is on modeling and analysis of SPLs which have drawn 
the attention of researchers recently~\cite{benavides10-is,Classen2011,Cordy2012}. 

Many approaches have been proposed to describe SPLs, the most prominent one 
being {\em feature diagrams}.
All these proposals seem to assume a global view of SPL as they start 
with a complete list of features and the variation points using a single 
vocabulary. 
All the subsequent SPL assets, like requirement documents, design models, 
source codes, test cases, documentations, share the same definition and 
vocabulary~\cite{Czarnecki00,1248997}. 
The assumption of a single homogeneous and global view of variability 
description is inapplicable in many practical settings, where 
there is no top level complete description of features and variabilities. They 
often evolve during the long lifetime of an SPL as new features and 
variabilities are added during the evolution. 
Further, SPL developers tend to use different representations and vocabulary of variability at different stages of 
development: at the requirement level, a more abstract and intuitive 
description of variation points are used, while at the 
design level, the efficiency of implementation of variation points is of primary 
concern.
For example, consider the case of an automotive SPL, where 
one variation point is the region of sale (eg. Asia Pacific, Europe, 
North America etc). At the requirement level, this 
variation point is expressed directly as an enumeration variable assuming one
value for every region. Whereas, at the design level, the 
variation point is expressed using two or three boolean variables; by setting 
the values of the boolean variable appropriately, the behavior 
specific to a region is selected at the time of deployment. 

We present a design verification approach that is more suited to the above kind of evolving 
SPLs in which different representation of variabilities would be used at the requirement 
and design level.  One natural and unique problem that arises in this context is to 
relate formally the variation points expressed at different levels of
abstractions.
Another challenge is the analysis complexity: the number of products is exponential in 
the number of variation points and hence product centric analyses are not 
scalable. We propose a compositional approach in which every 
feature of the SPL is first analyzed independently; the per-feature analysis 
results are then combined to get the analysis result for the whole SPL. 

For capturing variability in the behavior of an SPL, we have extended 
the standard finite state machine model, which we call 
{\em Finite State Machines with Variability}, in short, {\em FSMv}.
The behavior and variability of a feature at the requirement and design level 
can be modeled using FSMv. 
We define a conformance relation between FSMvs to relate the requirement and design models.
This relation is based upon the standard language containment of state machines.

One unique feature of FSMv is that it provides a compositional operator for composing 
the feature state machines to obtain a model for an SPL. This operator thus enables 
incremental addition of features and variabilities. The proposed verification approach 
exploits the compositional structure of the SPL models to contain the analysis complexity. 

\begin{figure}[b]
 \begin{center}
   {\includegraphics[width=0.6\textwidth]{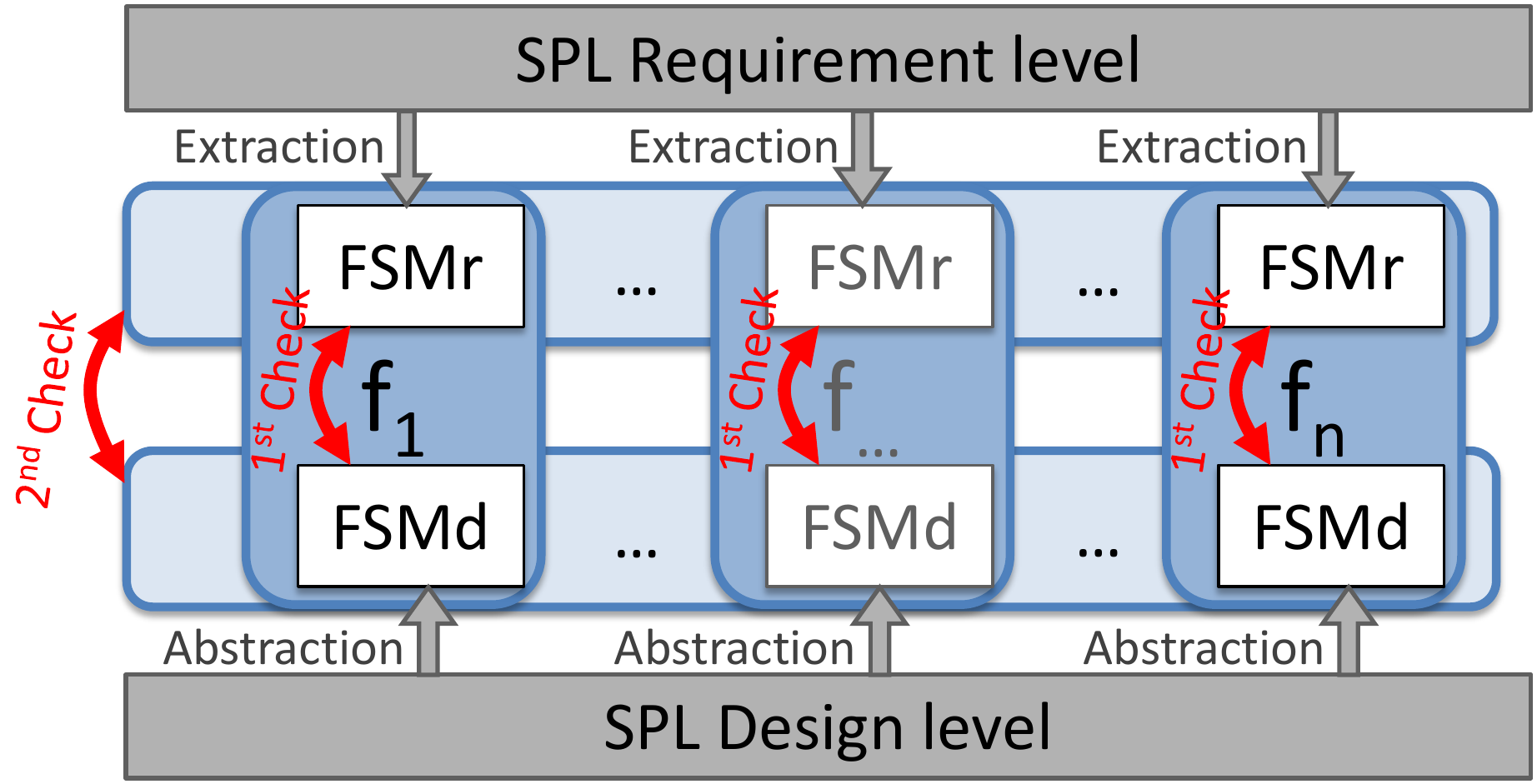}}
  \caption{The proposed verification framework.} 
  \label{figure_fsmvv}
  \end{center}
\end{figure}

Figure \ref{figure_fsmvv} summarizes the proposed approach. It shows an SPL composed of 
features $f_1$ to $f_n$. 
Each feature has an FSMv model of its requirements (called FSMr) and an FSMv model
derived from its design (called FSMd). The proposed analysis method checks whether the FSMd of every feature conforms to its 
FSMr ($1^{st}$ check). The output of this first step is a conformance relation between each 
pair of FSMr and FSMd. The obtained conformance relations are then used to check 
whether the actual behavior of the entire SPL conforms to the expected 
one ($2^{nd}$ check).  We reduce this check to checking the satisfiability of a 
Quantified Boolean Formula (QBF). There is no need to build the entire behavioral 
model of the SPL in the second step.  

We have built a prototype tool SPLEnD based upon this approach. This tool performs 
the first check using SPIN \cite{spinbook} while the the well-known QBF SAT solver
CirQit \cite{GoultiaevaB10} is used for the second step. We have experimented with 
the tool using modest industrial size examples with very encouraging results. 
An earlier version of this work (October 2012) can be found at \cite{HAL}.
\subsection{Related works}
FSMv and the proposed design verification approach were developed 
independently but has some apparent similarities with the 
FTS$^+$ model \cite{Classen2011}, which  also extends finite state machines to 
include certain product variability information. 
However, there is a motivational difference between the two formalisms. 
The aim of FTS$^+$ is to model the entire SPL 
and hence there is a single global machine with a single global vocabulary for 
expressing variabilities; the variability information represents the 
presence/absence of features in the SPL. 
In contrast, our approach is based upon a differnt 
view of SPL: a feature with variability is an increment in functionality 
and an SPL is a collection of features. We use a single
FSMv to model a feature and a whole SPL is modeled as a parallel 
composition of FSMv machines. 

The difference in viewpoint has another consequence: FTS$^+$ 
models, since they model the entire SPL, tend to be  large and hence 
has high analysis complexity. Efficient abstraction techniques 
are hence used for solving this problem~\cite{Cordy2012}. Whereas, each 
FSMv models a fraction of functionality and hence can be 
analysed easily. Further, the entire SPL can be modeled as 
composition of FSMvs and can be efficiently analysed using composition 
techniques.

Many other behavioral models have also been 
proposed \cite{LarsenNW07,Benveniste09,FantechiG08,Scheidemann08b}
which are usually coupled with a variability model such as OVM \cite{1248997}, the 
Czarnecki feature model \cite{Czarnecki00}, or VPM \cite{gomaa2008} to attain a fair level 
of variability expressibility. Unlike all these approaches, 
FTS$^+$ \cite{Classen2011} and FSMv capture the variability in an explicit way which we 
find more intuitive.

The Variation Point Model (VPM) of Hassan Gomaa \cite{gomaa2008} distinguishes 
between variability at the requirement and design levels but no design verification 
approach has been presented. Kathrin Berg {\em et al.}\cite{Berg2005} propose a model for 
variability handling throughout the life cycle of the SPL. Andreas 
Metzeger {\em et al.}\cite{Metzger2007} and M Riebisch {\em et al.}\cite{Riebisch2008} 
provide a similar approach but they do not consider the behavioral aspect. In the proposed approach, 
we extract the relation between requirement and design level variability from a behavioral analysis.

Kathi Fisler {\em et al.} \cite{Fisler2007} have developed an analysis based on 
three-valued model checking of automata defined using step-wise refinement. Later on, 
Jing Liu {\em et al.} \cite{lutz2011} have revisited Fisler's approach to provide a much more 
efficient method. Recently, Maxime Cordy {\em et al.} have extended Fisler's approach to LTL formula
\cite{CordySPLC2012}.
Kim Lauenroth {\em et al.} \cite{lauenroth2011a} as well as 
Andreas Classen {\em et al.} \cite{Classen2011,Cordy2012}, and 
Gruler {\em et al.} \cite{Scheidemann08a} have developed model checking methods for SPL 
behavior. These methods are based on the verification of LTL/CTL/modal $\mu$ calculus 
formula.  

All these verification methods assume a global view of variability and hence 
the representation of variability information is identical in both 
specification and the design.  In contrast, in our work the specification and 
design involve variability information at different levels of abstraction and 
hence one needs mapping information between the two levels. Furthermore, 
our formalism allows incremental addition of functionality and variability and enables
compositional verification.

\section{Design Verification of a Single Feature}
\label{sec_fbv}
An SPL, in general consists of multiple features, each feature having 
different functionality and variability.  A typical body control software of
an automotive system is an SPL that has several features such as door lock, 
lighting, seat control etc. Each of these features has a distinct function and 
variability. For example, the locking behaviour of a door lock function has a 
variation point called {\em transmission type}. If the transmission type is 
manual then the door is locked after the speed of the vehicle exceeds a certain 
threshold value; for automatic transmission, the door is locked when the gear 
position is shifted out of park. 
In this section we will focus on modeling and relating the design of a {\it single feature}
to its requirement.  
\subsection{FSMv and language refinement}
\label{ssec_ftf}
{\it Finite State Machines with Variability (FSMv)} is an extension 
of finite state machines, to represent all possible behaviours of a 
feature. 
Let $Var$ be a finite set of variables, each taking a value ranging 
over a finite set of values. 
Let $x \in Var$, and let $Dom(x)$ be the 
finite set of values that $x$ can take. 
The set of atomic formulae 
we consider are $x = a$, $x \neq a$,  
$x = y$, $x \neq y$ for $a \in Dom(x)$, and $x, y \in Var$. 
Let $A_{Var}$ denote the set of atomic formulae over $Var$. 
Let $\alpha$ represent a typical element of $A_{Var}$.
Define 
$$\Delta::=\alpha~|~\neg \Delta~|~\Delta \wedge \Delta~|~\Delta \vee \Delta~|\Delta \Rightarrow \Delta$$  to be the set 
of all well formed predicates over $Var$. 

\begin{defn}[FSMv]
An FSMv is a tuple ${\cal A}=\tuple{Q,q_0,\Sigma,Var,E,\rho}$ where:\\
(1) $Q$ is a finite set of states; $q_0$ is the initial state; (2) $\Sigma$ is a finite set of events;	
(3) $Var$ is a finite set of variables;
	(4) $E \subseteq Q\times \Delta \times \Sigma \times Q$ gives the set of transitions. 
	A transition $t=(s, g, a, s')$ represents a transition from state $s$ to state $s'$ 
	on event $a$; the predicate $g$ is called a guard of the transition $t$; 
	$g$ is consistent and defines the variability domain of the transition; 
	(5) $\rho \in \Delta$ is a consistent predicate called the global predicate. 
\end{defn}
The variables in $Var$ determines the variability allowed in the feature with 
each possible valuation of the variables corresponding to a variant. The 
allowed values of the variables are constrainted by the global predicate 
$\rho$.  For example, if $\rho$ is $((x=1) \vee (x=2)) \wedge (x=y-1)$,
then the allowed variants are those for which the values for the pairs $(x,y)$ 
are $(1,2), (2,3)$. 
The predicate in a transition determines the variants to which the transition 
is applicable. 
While drawing a transition 
 $t=(s, g, a, s')$, the edge connecting $s$ to $s'$ is decorated with  $g:a$.
When $g$ is true, we simply write $a$ on the edge.
\begin{defn}[Configuration]
\label{def_conf}
A configuration, denoted by $\pi$, is an assignment of values 
to the variables in $Var$. The set of all configurations is denoted by $\Pi_{Var}$, or $\Pi$, when $Var$ is clear from the context.
Define $\Pi(\rho)=\{\pi \mid \pi \models \rho\}$ to be the set 
of all those configurations that satisfy $\rho$. 
The elements of  $\Pi(\rho)$ are called valid configurations. 
Given a valid configuration $\pi$ and a transition $t=(s,g,a,s')$, we say that $t$ 
is enabled by $\pi$ if $\pi \models g$. 
\end{defn}

As a concrete example of an FSMv, consider the feature 
{\em Door lock} in automotive SPL which controls the 
locking of the doors when the vehicle starts.
The expected behavior of this feature is modeled using 
the FSMv $Req_{dl}$ described pictorially in Figure \ref{figure_adlreq}. 
In the initial state, this feature becomes active 
when all the doors are closed. The doors are locked when either the speed of 
the vehicle exceeds a predefined value or the gear is shifted out of park.
An unlock event reactivates the feature. 
There are
four configurations for this feature all of which are described using the three 
variables: $DL\_Enable$, $Transmission_{dl}$ and $DL\_User\_Pref$. 
The top box denotes the values that these variables can assume, and 
the bottom box gives the global predicate ($\rho$) associated with the machine. 
$\rho$ ensures that in every 
valid configuration, the variable $Transmission_{dl}$ having the value $Manual$ implies 
that $DL\_User\_Pref$ takes the value $Speed$. This captures the fact that in 
manual transmission, there is no park position on the gearbox. 
To avoid clutter, we have replaced guards of the form $x=i$ with $i$ in the figure. 
The transition labeled with $Disable:*$ means that when 
$DL\_Enable$ assumes the value $Disable$, it stalls on any event. 

\begin{figure}[hbpt]
 \begin{center}
   {\includegraphics[width=0.6\textwidth]{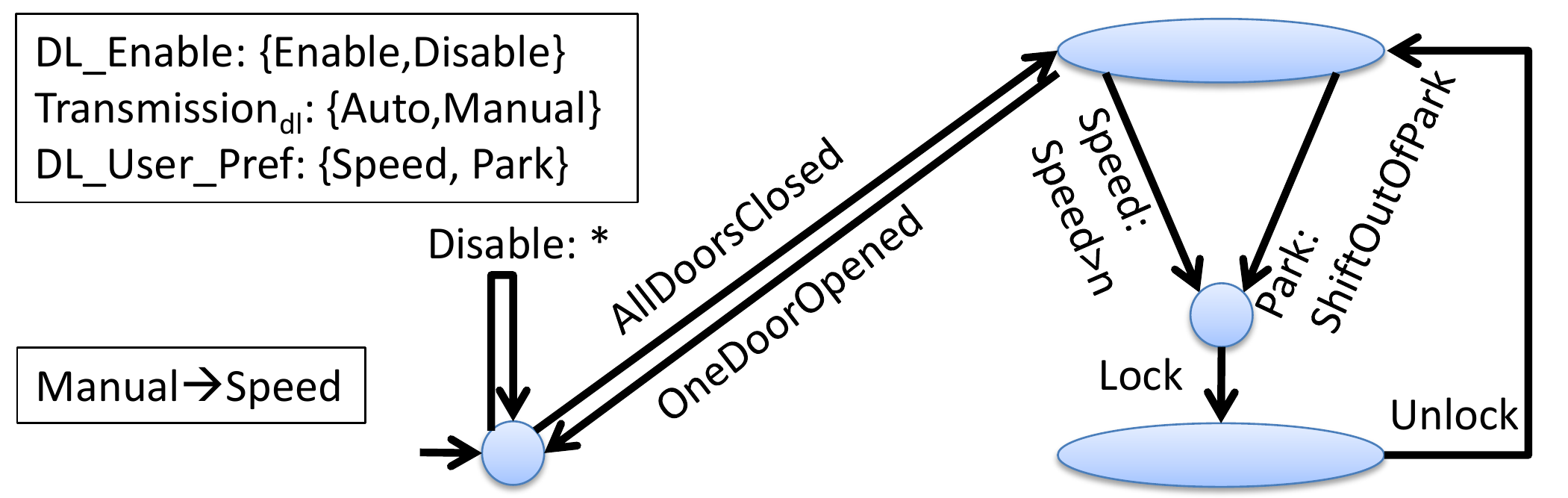}}
  \caption{The FSMv of the feature {\em Door lock}.} 
  \label{figure_adlreq}
  \end{center}
\end{figure}

\subsubsection{Requirement against Design}
\label{sssec_fsmdr}
In the requirement of a product line, the variability is usually discussed in 
terms of variation points, which are at a high level of abstraction and 
focused on clarity and expressibility. The restriction of the possible 
configurations is expressed as general constraints on these variation 
points, e.g., the global predicate $Manual\implies Speed$ in 
the {\em Door lock} example.  In contrast, in a design, the variability 
description is constrained by efficiency, implementability, ease of 
reconfiguration and deployment considerations. For instance, in the automotive 
applications, one often finds {\em calibration parameters} ranging over a set of boolean values.
Further, the constraint on the calibration parameters ($\rho$) takes the 
special form of the list of the possible configurations of the calibration 
parameters in order to easily configure the design. 

FSMv can capture both the design as well as the requirements of a feature.
 We distinguish the requirement and design models 
by denoting them FSMr and FSMd respectively.  
Figure \ref{figure_adlreq} presents the FSMr, $Req_{dl}$, of the feature {\em Door lock}.
The  FSMd, $Des_{dl}$, of the feature {\em Door lock} is presented in 
Figure \ref{figure_adldesign}. 
The structure of $Des_{dl}$ is similar to $Req_{dl}$ except 
that the top elliptical shaped state in Figure \ref{figure_adlreq} is split 
into two states (the top and the bottom elliptical shaped states) in Figure~\ref{figure_adldesign}. 
The top state is for auto-transmission
whereas the bottom one is for manual transmission as can be seen from the 
configuration label of the two transitions going from the initial state. 
Two variables $Cp1$ and $Cp2$ encode the possible 
configurations in the FSMd.
The box in  Figure \ref{figure_adldesign} 
depicts the set of possible values of these. 	   
$Cp1=Auto$ corresponds to the configuration in 
which the transmission is $Auto$ whereas $Cp1=Moff$ corresponds to either the 
manual transmission or the case when $Cp1$ 
is disabled;  
similarly, $Cp2=Speed$ means that the user preference is set on $Speed$,  
while $Cp2=Poff$ means either $Park$ or the case when $Cp2$ is disabled. 
\begin{figure}[hbpt]
 \begin{center}
   {\includegraphics[width=0.4\textwidth]{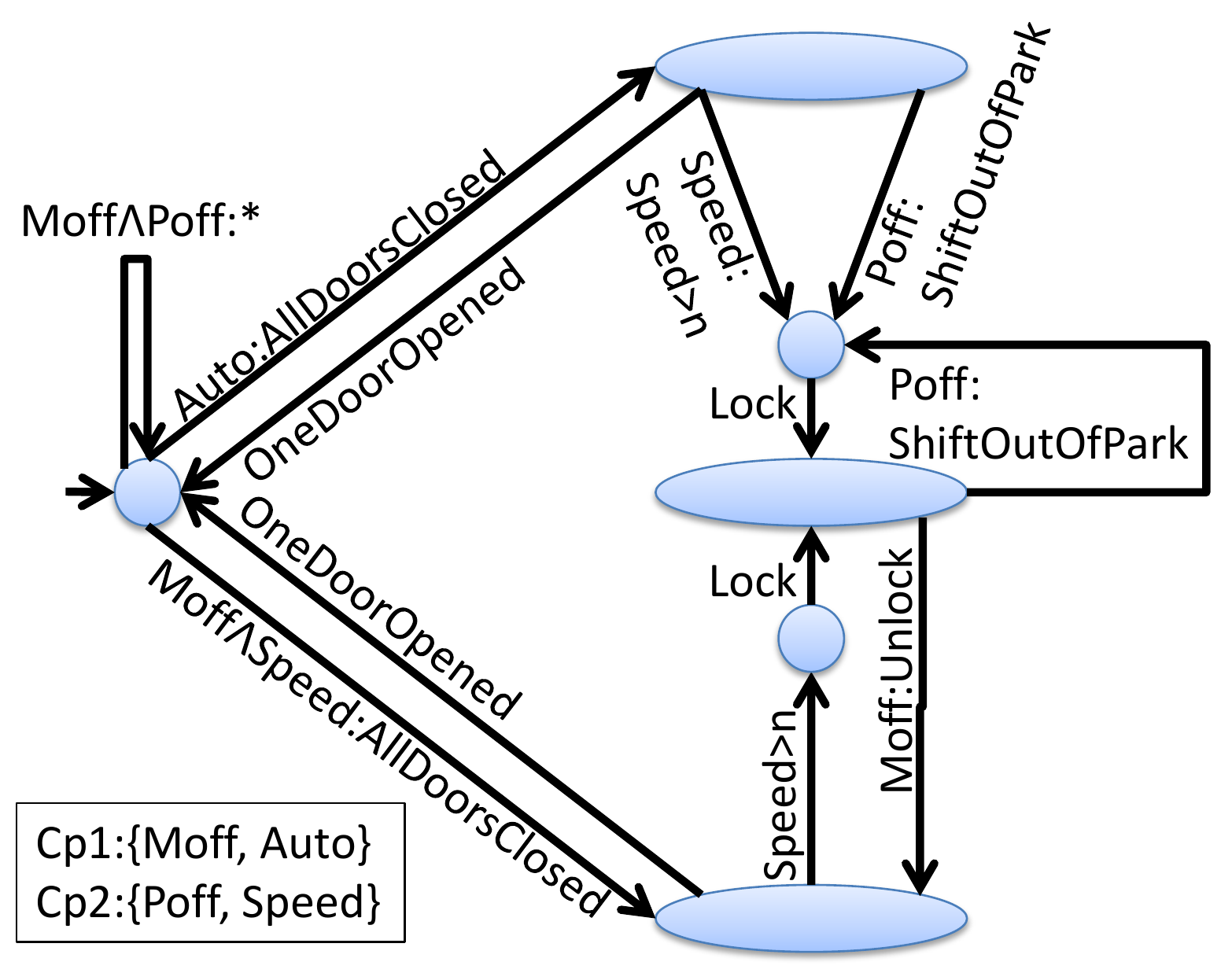}}
  \caption{$Des_{dl}$: the FSMd abstracted from the design of the feature {\em Door lock}.} 
  \label{figure_adldesign}
  \end{center}
\end{figure}


\subsection{Variants of FSMv and Conformance}
Having described the design and requirement behaviour of a feature $f$ 
using FSMd and FSMr respectively, we now define the notions of variants 
and conformance. A variant of an FSMv corresponds to one of the several 
possible behaviours of the feature (at the design, requirement level respectively).
Given a feature $f$, and a (FSMd, FSMr) pair corresponding to $f$, 
we say that the design of $f$ conforms to the requirements of $f$ 
provided every variant of the FSMd has a corresponding 
FSMr variant.


\begin{defn}[Variant of an FSMv]
\label{defn_conf22}
Let ${\cal A}=\tuple{Q,q_0,\Sigma,Var,E,\rho}$ be an FSMv and 
$\pi\in \Pi(\rho)$ be a valid configuration of ${\cal A}$. A variant of 
${\cal A}$ is an FSM obtained by retaining only  
transitions $t=(s,  g, a, s')$, and states $s,s'$ such that $g \models \pi$. 
Once the relevant states and transitions are identified, 
we remove the guards $g$ from all the transitions; 
$\rho$ is also removed.
The resultant FSM is denoted  ${\cal A} \downarrow \pi$.
\end{defn}

In the example of FSMr for the feature {\em Door lock}, the variant
$Req_{dl} \downarrow \tuple{Enable,Auto,$ $Park}$ does not contain 
the transitions with the event $Speed>n$ and $*$. 
We compare  the FSMd and FSMr of a feature $f$ using their variants. 
Given an FSMv  ${\cal A}$, we associate with each configuration $\pi$ of ${\cal A}$ 
 the language of the FSM ${\cal A}\downarrow \pi$, 
denoted by $L({\cal A}\downarrow \pi)$. 
\label{sssec_ref}
We say that an FSMd ${\cal A}_d$ conforms to an FSMr ${\cal A}_r$ if and only if 
the behaviour of every variant of ${\cal A}_d$ is contained 
in the behaviour of some variant of ${\cal A}_r$. 
 
\begin{defn}[The conformance mapping $\Phi$]
\label{def_mapcalvp}
Let ${\cal A}_r$ and ${\cal A}_d$ be a pair of FSMr and FSMd respectively with global 
predicates $\rho^d$ and $\rho^r$. Let $\Pi_d, \Pi_r$ be the set of all 
design, requirement configurations. Then ${\cal A}_d$ conforms to ${\cal A}_r$ 
denoted ${\cal A}_d \leq_{\Phi} {\cal A}_r$ 
if  there exists a mapping $\Phi: \Pi_d(\rho^d) \to 2^{\Pi_r(\rho^r)}$ 
such that $\forall \pi_d \in \Pi_d(\rho^d), \exists \pi_r \in \Pi_r(\rho^r)$ satisfying 
$L({\cal A}_d \downarrow \pi_d) \subseteq  L({\cal A}_r \downarrow \pi_r)$. 
$\Phi$ is called the conformance mapping.
\end{defn}


In the feature {\em Door lock}, $\Phi(\tuple{Moff,$ $Speed})$ contains $\tuple{Enable,Manual,Speed}$
 since $L(Des_{dl}\downarrow \tuple{Moff,$ $Speed}) \subseteq L(Req_{dl}\downarrow \tuple{Enable,Manual,Speed})$.

\subsection{Checking the conformance}
\label{ssec_fmusingfsmv}
Let $f$ be a feature with FSMr $Req_{f}$ and FSMd $Des_{f}$.
Then the conformance checking problem is to compute a mapping $\Phi$ such that 
$Des_{f}\le_\Phi Req_{f}$.

The conformance mapping is computed by comparing every projection of 
$Des_f$ with every projection of $Req_f$. 
Algorithm 1, given below, presents a possible implementation using the standard
automata containment algorithm\cite{vardi-wolper}, as implemented in the SPIN 
model checker \cite{spinbook}. To use SPIN, one should describe the system
along with the checked property in the Promela language \cite{spinbook}.
Out of this description, SPIN generates the {\em pan.c} file 
which is the verifier for the system. After compilation, the {\em pan(.exe)} executable
performs the verification. \\

Algorithm 1 starts by generating a Promela file containing the definition of
(i) the environment, (ii) $Des_{f}$, (iii) $Req_{f}$, (iv) the initialization sequence and 
(v) a never claim which holds for the language containment condition. 
During the initialization, the configuration of 
$Des_{f}$ and $Req_{f}$ are initialized with a random couple of configurations. 
Then the environment, followed by $Des_{f}$ and $Req_{f}$ are run atomically. The never claim 
assertion is :  
$never(\box \diamond (\neg error(Des_{f}) \wedge error(Req_{f}))$, 
where $error(X)$ means that $X$ is in error state. The never claim is violated when 
the design is not in the error state but the requirement process is in 
the error state. This corresponds to a design configuration $\pi_d$
such that $Des_f \downarrow \pi_d$  handles an event, 
while $Req_f \downarrow \pi_r$  does not, for all possible requirement configurations $\pi_r$.
Algorithm~1 runs the full verification algorithm of SPIN for every pair $(\pi_d, \pi_r)$
 of design and requirement configurations.
SPIN(i.e. {\em pan(.exe)}) returns the list of pairs for which the conformance condition is violated.
Every other pair is added to the conformance mapping $\Phi$.
Lemma \ref{lemma:spin} proves the correctness of Algorithm 1.

\begin{algorithm}[htb]
\label{algo_check_feature_language}
\caption{implements the conformance checking using SPIN.}
\begin{algorithmic}
\STATE \textbf{Input :} $Des_f$, $Req_f$.\\
\STATE \textbf{Output :} The mapping $\Phi$ when $Des_{f}\le_{\Phi}Req_{f}$\\ 
\STATE 1. Generate a Promela file which contains $Req_{f}$, $Des_{f}$, the environment, the never claim, and the initialization sequence.
\STATE 2. Launch the full verification algorithm of spin
\STATE 3. Build the mapping $\Phi$ from the output of spin.
\STATE 4. Conclude whether the design conforms to the requirement
\IF{$\forall \pi_d\in \Pi(\rho_d)$, $\Phi(\pi_d)\ne \emptyset$}
\RETURN $true$ along with ($\Phi$) 
\ELSE 
\RETURN $false$ along with ($\pi_d$) \COMMENT{where $\pi_d$ has no correspondence through $\Phi$}
\ENDIF
\end{algorithmic}
\end{algorithm}

\begin{lem}
\label{lemma:spin}
Given FSMd $Des_f$ and FSMr $Req_f$ for a feature $f$, let $(\pi_d,\pi_r)$ be a pair of design and requirement configurations. 
Then, $L(Des_{f}\downarrow \pi_d) \not\subseteq L(Req_{f}\downarrow \pi_r)$ if and only if 
$\neg error(Des_{f}) \wedge error(Req_{f})$.
\end{lem}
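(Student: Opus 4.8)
The plan is to unfold both directions of the biconditional against the operational meaning of the error states in the Promela encoding, exhibiting an explicit witness word in each direction. First I would fix the semantics of the encoding produced in step~1 of Algorithm~1: for the fixed configuration pair $(\pi_d,\pi_r)$, the two variants $Des_f\downarrow \pi_d$ and $Req_f\downarrow \pi_r$ are driven by the \emph{common} event stream emitted by the environment process, and a machine enters its error state precisely when the environment emits an event that it cannot take from its current state. Consequently $L(Des_f\downarrow \pi_d)$ (respectively $L(Req_f\downarrow \pi_r)$) is exactly the set of finite event sequences that keep $Des_f$ (respectively $Req_f$) out of its error state, since by Definition~\ref{defn_conf22} a variant is an ordinary FSM whose traces are the event sequences it can process.

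For the ($\Rightarrow$) direction I would assume $L(Des_f\downarrow \pi_d) \not\subseteq L(Req_f\downarrow \pi_r)$ and take a word $w=a_1\cdots a_n$ accepted by the design variant but rejected by the requirement variant. Letting the environment emit exactly $w$, the design process takes a transition on every $a_i$ and never reaches its error state because $w\in L(Des_f\downarrow \pi_d)$, so $\neg error(Des_f)$ holds throughout; since $w\notin L(Req_f\downarrow \pi_r)$, there is a least index $k$ at which $Req_f$ has no enabled transition on $a_k$ and $error(Req_f)$ becomes true. The global state reached after $a_1\cdots a_k$ thus satisfies $\neg error(Des_f)\wedge error(Req_f)$, which is what the never claim is designed to detect.

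For the ($\Leftarrow$) direction I would assume some reachable global state of the model satisfies $\neg error(Des_f)\wedge error(Req_f)$ and let $w=a_1\cdots a_k$ be the event sequence emitted along a run reaching it. As $Des_f$ is not in its error state it processed all of $w$, so $w\in L(Des_f\downarrow \pi_d)$; as $Req_f$ is in its error state it failed on some event of $w$, so the requirement variant rejects $w$ and $w\notin L(Req_f\downarrow \pi_r)$. Hence $w$ witnesses $L(Des_f\downarrow \pi_d)\not\subseteq L(Req_f\downarrow \pi_r)$. Because the error states are absorbing, such a reachable state can stutter forever, so the liveness wrapping $\Box\Diamond(\cdot)$ used by SPIN detects exactly these states, linking the propositional condition of the statement to the actual never claim.

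The main obstacle I anticipate is the treatment of nondeterminism in the requirement variant: the single synchronized run argument for $w\notin L(Req_f\downarrow \pi_r)$ is sound only if reaching the error state certifies that \emph{no} accepting run of $Req_f$ exists on $w$. I would discharge this by relying on the variants being input-deterministic once the configuration is fixed, so that at most one transition is enabled per state and event and the monitored run faithfully reflects acceptance or rejection; where this fails, one must instead run $Des_f$ against the determinized/complemented requirement automaton from the standard containment construction of~\cite{vardi-wolper}, and I would remark that SPIN's exhaustive exploration of the product together with the never claim realizes precisely the emptiness check $L(Des_f\downarrow \pi_d)\cap \overline{L(Req_f\downarrow \pi_r)}=\emptyset$ underlying that direction.
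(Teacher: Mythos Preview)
Your proposal is correct and follows essentially the same approach as the paper: both directions proceed by exhibiting a witness word, with the forward direction locating the first event at which the requirement variant fails while the design variant does not, and the converse handled by (the contrapositive of) the same observation. Your treatment is in fact more careful than the paper's very terse argument, which leaves the input-determinism assumption on the requirement variant implicit; your remark about resorting to the complementation construction of~\cite{vardi-wolper} when determinism fails is a genuine addition not present in the original proof.
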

\begin{proof}
Assume $L(Des_{f}\downarrow \pi_d) \not\subseteq L(Req_{f}\downarrow \pi_r)$. 
Then there exists a word $w \in L(Des_f\downarrow \pi_d)$ which is prefixed by $u.e$,  
with $u$ a finite prefix of a word in $L(Req_f \downarrow \pi_r)$, and 
$e$ an event such that $u.e$ is not a prefix of any word in 
$L(Req_f\downarrow \pi_r)$. In such a situation, $Des_{f}$ does not go 
to the error state but $Req_{f}$ does.\\

Conversely, if $L(Des_f\downarrow \pi_d) \subseteq L(Req_f\downarrow \pi_r)$, then
whenever $Des_f$ is not in an error state, 
 $Req_f$ will also not be in an error state.
\qed 
\end{proof}


\section{Design Verification of SPL}
\label{sec_splbv}
In the previous section, we looked at individual features in an SPL and provided a method for
 comparing the design and requirements of a feature, both containing variabilities. 
In this section, we extend this method to verifying a whole SPL design against its requirements. 
An SPL is essentially a composition of multiple features 
satisfying certain constraints.
We define a parallel composition operator over FSMv to model an SPL. 
The features in an SPL can interact and we  follow one of the
standard methods of allowing the composed FSMv models to share some 
common events, which  correspond to
two-party handshake communication events. 
A distinguishing aspect of the proposed parallel operator is that it takes into account the constraints across
 the composed machines. The constraints
could be of various types, e.g.  dependency and exclusion relations, and are 
modeled as predicates over variables of the composed features.
 \begin{defn}[Parallel composition of FSMv]\-\\
\label{def_compo_fsmc}
Let ${\cal A}_x=\tuple{Q_x,q_0^x,\Sigma_x,Var_x, E_x, \rho_x}$,  $x\in\{1,2\}$ be two FSMv's with 
$Var_1 \cap Var_2=\emptyset$. 
Let $H = \Sigma_1 \cap \Sigma_2$ be the set of handshaking events.
Let $\rho_{12}$ be a predicate over $Var_1 \cup Var_2$, 
such that $\rho_{12} \wedge \rho_1 \wedge \rho_2$ is consistent. 
$\rho_{12}$ is the composition predicate capturing the possible constraints between 
the variabilities of the two composed features. 
Let $\rho=\rho_{12} \wedge \rho_1 \wedge \rho_2$. 

The parallel composition of ${\cal A}_1$ and ${\cal A}_2$ denoted by 
${\cal A}={\cal A}_1 \parallel {\cal A}_2$ is a tuple  
$\tuple{Q_1\times Q_2,(q_0^1,q_0^2),\Sigma_1\cup \Sigma_2, Var_1\cup Var_2, E, 
\rho}$ with transitions defined as follows: Consider a state  $(s_1, s_2) \in  Q_1\times Q_2$, 
and transitions  $(s_1,g_1,a_1,s'_1) \in E_1$ and $(s_2,g_2,a_2,s'_2) \in E_2$.\\
(1) If $a_1=a_2=a \in H$, define $((s_1,s_2), g_1 \wedge g_2, a, (s'_1, s'_2)) \in E$, 
provided $g_1 \wedge g_2$ is consistent and $g_1 \wedge g_2 \models \rho$. \\
(2) If $a_1 \in \Sigma_1 \backslash H$, 
define $((s_1,s_2), g_1, a_1, (s'_1, s_2)) \in E$, $g_1 \models \rho$.\\
(3) If $a_2 \in \Sigma_2 \backslash H$, 
define $((s_1,s_2), g_2, a_2, (s_1, s'_2)) \in E$, $g_2 \models \rho$.
\end{defn}


For illustration, consider the feature {\em Door unlock} which 
automates the unlocking of the doors in a vehicle.
Figure \ref{figure_adu}-a gives the FSMr of the feature extracted from the requirements.
From the initial state, the feature becomes 
active when the event $Lock$ happens. As soon as either the key 
is removed from ignition or the gear is shifted to park position, the doors get unlocked 
and the feature {\em Door unlock} becomes inactive.
Figure \ref{figure_adu}-b presents the FSMd of the feature {\em Door unlock}.
It is quite similar to the requirement except that the active state is split in two:  
the feature reacts to the {\em ignition Off} event in one state, and 
to the {\em Shift Into Park} event in another state.

Let us consider the composition of the two FSMrs of the features {\em Door lock} and {\em Door unlock}.  
The handshake events between the two features are {\em Lock} and {\em Unlock}. 
In the composition, we introduce the following composition  
predicate: $(DU\_Enable = Enable \Leftrightarrow DL\_Enable = Enable) 
\wedge Transmission_{dl} = Transmission_{du}$, 
which brings out the natural constraints that {\em Door lock} feature is enabled
if and only if {\em Door unlock} is also enabled and the 
transmission status has to be the same.

\begin{figure}[hbpt]
 \begin{center}
   {\includegraphics[width=0.3\textwidth]{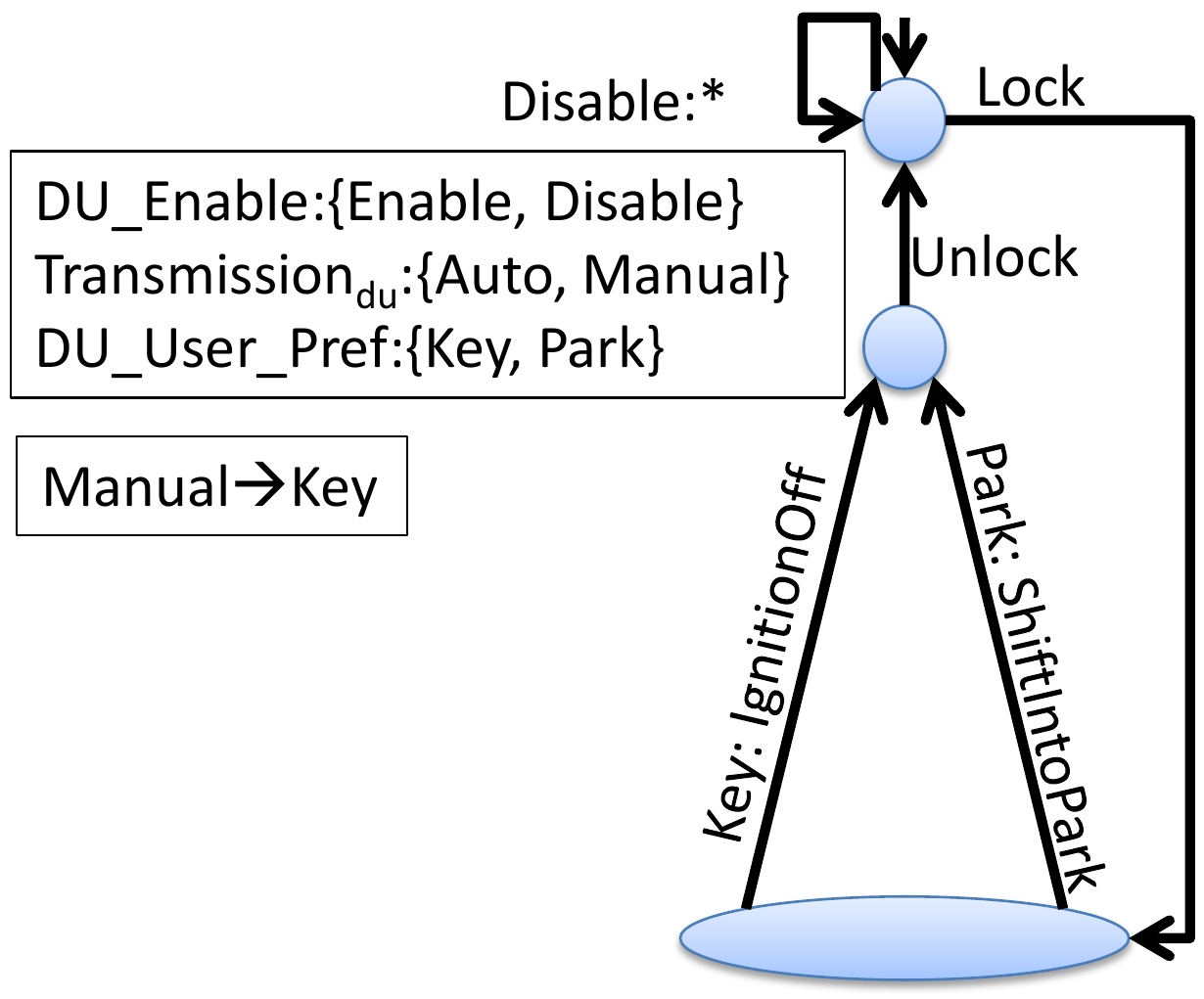}}~~~~~~
   {\includegraphics[width=0.3\textwidth]{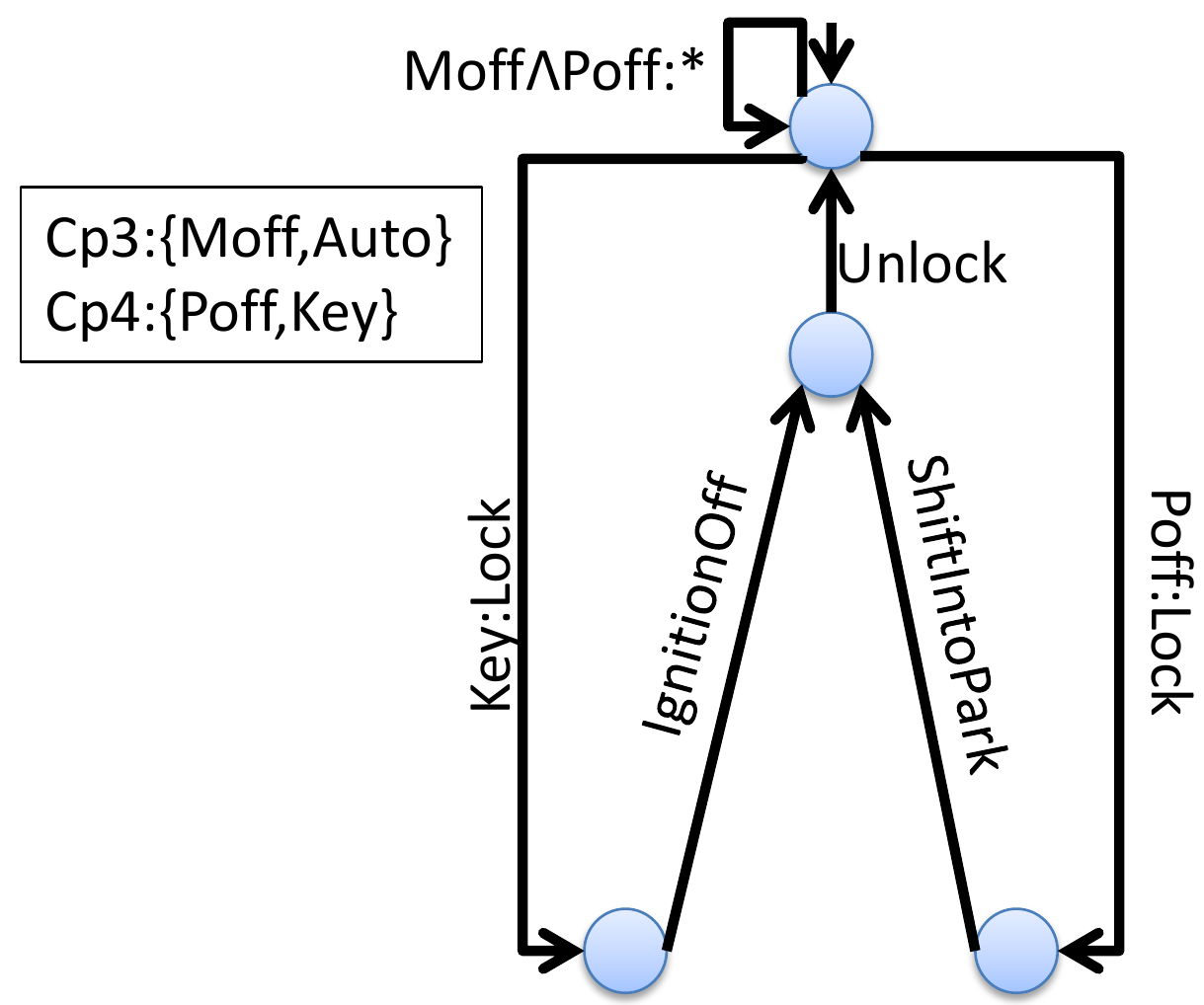}}\\
   a)~~~~~~~~~~~~~~~~~~~~~~~~~~~~~~~~~~~~~~~~~
   b)
  \caption{a) $Req_{du}$: the {\em Door unlock} FSMr  and b) $Des_{du}$: the corresponding FSMd.} 
  \label{figure_adu}
  \end{center}
\end{figure}
The valid configurations after composition are restricted by the composition predicate.
We provide a few definitions to define composite valid configurations. 

\begin{defn}[Composing Configurations]
\label{not_conf_comp}
Let ${\cal A}_i=(Q_i, q_0^i, \Sigma_i, Var_i, E_i, \rho_i)$ be two FSMv's, and 
let ${\cal A}={\cal A}_1 \parallel {\cal A}_2$ be as given by definition \ref{def_compo_fsmc}. 
Let $\rho=\rho_{12} \wedge \rho_1 \wedge \rho_2$ be the global predicate 
of ${\cal A}$. Consider  two valid configurations  $\pi_1\in \Pi(\rho_1)$ and $\pi_2\in \Pi(\rho_2)$ 
of ${\cal A}_1$ and ${\cal A}_2$. 
The compostion of $\pi_1, \pi_2$, denoted $\pi_{12}$ 
is a configuration over $Var_1 \cup Var_2$ such that 
$\pi_{12}$ agrees with $\pi_1$ over $Var_1$, and 
agrees with $\pi_2$ over $Var_2$, and $\pi_{12} \models \rho$. $\pi_{12}$ is a valid configuration 
of ${\cal A}$ and we denote it by $\pi_{12}=\pi_1+\pi_2$.
\end{defn}

\begin{lem}
\label{lemma:comp}
Let ${\cal A}_1$ and ${\cal A}_2$ be two FSMv's.
For each valid configuration $\pi$ of ${\cal A}_1 \parallel {\cal A}_2$, 
there are valid configurations $\pi_1$ of ${\cal A}_1$ and  $\pi_2$ of ${\cal A}_2$ such that $\pi=\pi_1+\pi_2$. 
\end{lem}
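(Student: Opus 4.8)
The plan is to obtain $\pi_1$ and $\pi_2$ by simply projecting the given configuration $\pi$ onto the two disjoint variable sets, and then to verify that these projections are valid for the component machines and recombine to $\pi$. Concretely, since $Var_1 \cap Var_2 = \emptyset$ and $\pi$ is an assignment over $Var_1 \cup Var_2$, I would define $\pi_1 = \pi|_{Var_1}$ and $\pi_2 = \pi|_{Var_2}$, the restrictions of $\pi$ to the two variable sets.

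First I would argue that each projection is a valid configuration of the corresponding machine, i.e. $\pi_1 \models \rho_1$ and $\pi_2 \models \rho_2$. By Definition \ref{def_compo_fsmc}, the global predicate of ${\cal A}_1 \parallel {\cal A}_2$ is $\rho = \rho_{12} \wedge \rho_1 \wedge \rho_2$, and since $\pi$ is valid we have $\pi \models \rho$, hence $\pi \models \rho_1$ and $\pi \models \rho_2$. The key observation is that $\rho_1$ is a predicate over $Var_1$ only, so whether it is satisfied depends solely on the values $\pi$ assigns to $Var_1$, that is on $\pi_1$; therefore $\pi \models \rho_1$ is equivalent to $\pi_1 \models \rho_1$, giving $\pi_1 \in \Pi(\rho_1)$. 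The same reasoning applied to $Var_2$ yields $\pi_2 \in \Pi(\rho_2)$.

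It then remains to show $\pi = \pi_1 + \pi_2$ in the sense of Definition \ref{not_conf_comp}. I would check the three defining conditions of the sum: by construction $\pi$ agrees with $\pi_1$ over $Var_1$ and with $\pi_2$ over $Var_2$, and $\pi \models \rho$ holds by hypothesis. Since $Var_1$ and $Var_2$ are disjoint and together exhaust $Var_1 \cup Var_2$, any assignment meeting the first two agreement conditions is forced to coincide with $\pi$ on every variable; thus $\pi$ is the unique configuration witnessing Definition \ref{not_conf_comp}, and by definition $\pi_1 + \pi_2 = \pi$.

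I do not expect a serious obstacle here: the statement is essentially the converse of the construction in Definition \ref{not_conf_comp}, and the whole argument rests on the disjointness of $Var_1$ and $Var_2$ together with the fact that each component predicate $\rho_i$ mentions only the variables of ${\cal A}_i$. The one point requiring care is precisely this locality of the predicates --- it is what guarantees that restricting $\pi$ preserves satisfaction of $\rho_1$ and $\rho_2$ individually, rather than merely of their conjunction --- so I would state it explicitly rather than leave it implicit.
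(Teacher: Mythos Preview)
Your proposal is correct and matches the paper's own proof essentially step for step: both take $\pi_1$ and $\pi_2$ to be the restrictions of $\pi$ to $Var_1$ and $Var_2$, use $\pi \models \rho_{12} \wedge \rho_1 \wedge \rho_2$ to conclude $\pi_i \models \rho_i$, and then invoke Definition~\ref{not_conf_comp} to obtain $\pi = \pi_1 + \pi_2$. Your version is merely a bit more explicit about the locality of each $\rho_i$ and about checking the three clauses of the composition definition, which is fine.
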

\begin{proof}
 Let $\pi \in \Pi(\rho)$ with $\rho=\rho_{12} \wedge\rho_1\wedge\rho_2$ 
be a valid configuration of ${\cal A}_1 \parallel {\cal A}_2$.  
 $\rho_1$ and $\rho_2$ are the global predicates of 
${\cal A}_1$, ${\cal A}_2$ respectively, and $\rho_{12}$ is the composition 
predicate of ${\cal A}_1$, ${\cal A}_2$. By definition of valid configuration, 
$\pi \models \rho$; hence $\pi \models \rho_1$ and $\pi \models \rho_2$. 
Since $\pi$ is a configuration over $Var_1 \cup Var_2$, let us consider 
the restriction of $\pi$ on $Var_1$, call the resulting configuration $\pi_1$. 
Then $\pi_1 \models \rho_1$. Similarly, call  the restriction of $\pi$ 
on $Var_2$ as $\pi_2$. Then $\pi_2 \models \rho_2$. Then, $\pi_1, \pi_2$ 
are respectively valid configurations of ${\cal A}_1$ and ${\cal A}_2$. 
Hence, by definition \ref{not_conf_comp}, we obtain $\pi=\pi_1+\pi_2$. 
\qed
\end{proof}

In the example of feature $Door~Lock$,  the configuration $\tuple{Enable,Auto,Speed}$ from $Req_{dl}$ 
can be composed with $\tuple{Enable,$ $Auto,Key}$ from $Req_{du}$ because the transmission is $Auto$ in both 
(which is specified in the composition predicate).
$\tuple{Enable,Auto,Speed,Enable,Auto,Key}$
is a configuration of the parallel composition of $Req_{dl}$ with $Req_{du}$.

The parallel composition of FSMv's is such that each variant of the composition 
of two FSMv's is equal to the composition of 
 variants of the individual FSMv's.  

\begin{lem}[Variants of a composed FSMv]
\label{lemma:comp1}
Let ${\cal A}_1$ and ${\cal A}_2$ be two FSMv machines. Let $\pi$ be a valid configuration of 
${\cal A}_1\parallel {\cal A}_2$.
Then $L([{\cal A}_1\parallel {\cal A}_2] \downarrow \pi)$ = 
$L({\cal A}_1\downarrow \pi) \parallel L({\cal A}_2\downarrow \pi)$. 
 \footnote{The right hand side $\parallel$ refers to the standard communicating finite state machine composition.}
\end{lem}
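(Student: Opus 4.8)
The plan is to show that the two finite-state machines on either side of the claimed equality are in fact the \emph{same} FSM, from which equality of their languages is immediate. First I would invoke Lemma~\ref{lemma:comp} to decompose $\pi$: since $\pi$ is a valid configuration of ${\cal A}_1\parallel{\cal A}_2$, there are valid configurations $\pi_1=\pi|_{Var_1}$ of ${\cal A}_1$ and $\pi_2=\pi|_{Var_2}$ of ${\cal A}_2$ with $\pi=\pi_1+\pi_2$. The key preliminary observation is that every guard $g$ occurring in ${\cal A}_i$ is a predicate over $Var_i$ only, so $\pi\models g$ if and only if $\pi_i\models g$. Consequently ${\cal A}_i\downarrow\pi={\cal A}_i\downarrow\pi_i$ (Definition~\ref{defn_conf22} retains exactly the transitions enabled by the configuration), and the right-hand side $L({\cal A}_1\downarrow\pi)\parallel L({\cal A}_2\downarrow\pi)$ is the language of $N:=({\cal A}_1\downarrow\pi_1)\parallel({\cal A}_2\downarrow\pi_2)$. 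The goal therefore reduces to proving that $M:=[{\cal A}_1\parallel{\cal A}_2]\downarrow\pi$ and $N$ have identical transition relations; both already share the state set $Q_1\times Q_2$, the initial state $(q_0^1,q_0^2)$, and the event set $\Sigma_1\cup\Sigma_2$.

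Next I would prove the transition relations coincide by a case analysis mirroring the three clauses of Definition~\ref{def_compo_fsmc}. For a handshake event $a\in H$, the transition $((s_1,s_2),a,(s_1',s_2'))$ lies in $M$ iff ${\cal A}_1\parallel{\cal A}_2$ carries a composite transition $((s_1,s_2),g_1\wedge g_2,a,(s_1',s_2'))$ that is enabled by $\pi$; by clause~(1) this means $(s_1,g_1,a,s_1')\in E_1$, $(s_2,g_2,a,s_2')\in E_2$, the guard $g_1\wedge g_2$ is consistent with $\rho$, and $\pi\models g_1\wedge g_2$. Since $\pi\models g_1\wedge g_2$ while $\pi\models\rho$ (as $\pi$ is valid), the consistency side-condition is witnessed by $\pi$ itself and is automatic, and $\pi\models g_1\wedge g_2$ is equivalent to $\pi_1\models g_1$ and $\pi_2\models g_2$. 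The latter is exactly the condition under which ${\cal A}_1\downarrow\pi_1$ retains $(s_1,a,s_1')$ and ${\cal A}_2\downarrow\pi_2$ retains $(s_2,a,s_2')$, i.e. the condition for the same handshake transition to appear in $N$. The two local cases are symmetric, so I would spell out only clause~(2): $((s_1,s_2),a_1,(s_1',s_2))$ with $a_1\in\Sigma_1\setminus H$ is in $M$ iff $(s_1,g_1,a_1,s_1')\in E_1$, $g_1$ is consistent with $\rho$, and $\pi\models g_1$; again $\pi$ witnesses consistency, and $\pi\models g_1\iff\pi_1\models g_1$, which is precisely when $N$ interleaves the retained ${\cal A}_1\downarrow\pi_1$-transition at every ${\cal A}_2$-state. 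Hence $M$ and $N$ have the same transitions, so $M=N$ and $L(M)=L(N)$.

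I expect the main obstacle to be the interplay between the two different filterings at work: the composition in Definition~\ref{def_compo_fsmc} already discards composite transitions whose guard is not consistent with the global predicate $\rho$ (I read the side-conditions $g_1\wedge g_2\models\rho$ and $g_1\models\rho$ as \emph{consistency of the guard with} $\rho$, which is what makes the statement true), while the variant operation of Definition~\ref{defn_conf22} discards transitions not enabled by the specific configuration $\pi$. Getting the lemma right hinges on showing these two filters commute, and the decisive fact is that $\pi$ is a \emph{valid} configuration, i.e.\ $\pi\models\rho$. This guarantees, in each case above, that any transition the variant wishes to keep ($\pi\models g$) automatically has a $\rho$-consistent guard (witnessed by $\pi$) and hence really does survive in ${\cal A}_1\parallel{\cal A}_2$; conversely every composite transition carries a $\rho$-consistent guard by construction, so projecting by $\pi$ never needs to re-impose $\rho$. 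I would also note that because both $M$ and $N$ prune their states using exactly the same surviving transition set, no spurious difference in reachable states can arise, and the prefix-closed trace languages of the two identical machines agree.
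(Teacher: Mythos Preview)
Your proof is correct and in fact more explicit than the paper's. Both arguments begin by invoking Lemma~\ref{lemma:comp} to split $\pi=\pi_1+\pi_2$ and by noting that ${\cal A}_i\downarrow\pi={\cal A}_i\downarrow\pi_i}$ since guards of ${\cal A}_i$ mention only $Var_i$. From there the paper works at the \emph{language} level: after recalling the definition of asynchronous shuffle and the standard fact $L(M_1\parallel M_2)=L(M_1)\parallel L(M_2)$ for ordinary FSMs, it simply asserts that any word $w\in L([{\cal A}_1\parallel{\cal A}_2]\downarrow\pi)$ decomposes as $w_1\parallel w_2$ with $w_i\in L({\cal A}_i\downarrow\pi_i)$, and declares the converse ``similar.'' You instead argue \emph{structurally}, showing by a case split on the three clauses of Definition~\ref{def_compo_fsmc} that $[{\cal A}_1\parallel{\cal A}_2]\downarrow\pi$ and $({\cal A}_1\downarrow\pi_1)\parallel({\cal A}_2\downarrow\pi_2)$ are literally the same FSM, and then invoke the same standard fact once. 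Your route is tighter and, notably, forces you to confront the $\rho$-side-conditions on composite transitions, which the paper's proof passes over in silence; your observation that a valid $\pi$ simultaneously satisfies $\rho$ and the relevant guard, hence witnesses whatever consistency is required, is precisely what makes both inclusions go through.
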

 \begin{proof}
\label{proof:comp1}
We review some preliminary definitions before the proof. In the following, the operation $\parallel$ 
stands for (i) shuffle of words, (ii) shuffle of languages, (iii)parallel composition of FSMs, and 
(iv) parallel composition of FSMv. The context is clear in each case; hence 
there is no confusion.

\begin{defn}
\label{basic1}
 Let $\Sigma_1, \dots, \Sigma_n$ be $n$ finite sets of symbols. Let $\Sigma$ be a finite set. 
Given a word $w \in \Sigma^*$, we denote by $w \downarrow \Sigma_i$, 
the unique subword of $w$ over $\Sigma_i^*$. For example, 
if $\Sigma_1=\{a,b,e\}, \Sigma_2=\{a,e,f\}$, and if we consider $w=aefedefr \in \{a,d,e,f,r\}^*$, 
then $w \downarrow \Sigma_1=aeee$ and $w \downarrow \Sigma_2=aefeef$.  
\end{defn}

\begin{defn}(Asynchronous Shuffle)
\label{prod}
Let $\Sigma_1, \dots, \Sigma_n$ be $n$ finite sets. Let $\Sigma=\cup_{i=1}^n \Sigma_i$. 
Consider $n$ words $u_1, u_2, \dots, u_n$, $u_i \in \Sigma_i^*$. 
The asynchronous shuffle of $u_1, \dots, u_n$ denoted $u_1 \parallel \dots \parallel u_n$ 
is defined as $\{w \mid w \downarrow \Sigma_i=u_i\}$.
\end{defn}
As an example, consider $\Sigma_1=\{a,b,c,f\},\Sigma_2=\{a,d,e,f\}, \Sigma_3=\{c,d,f\}$, and 
 the words $u_1=abcf, u_2=adfe, u_3=dcf$. Then the word $w=abdcfe$ 
is in $u_1 \parallel u_2 \parallel u_3$ since, 
$w \downarrow \Sigma_i=u_i$ for $i=1,2,3$. Similarly, 
 the word $w'=adbcfe$ is also in  $u_1 \parallel u_2 \parallel u_3$.
 However, the word $w''=aebcfd$ is not in $u_1 \parallel u_2 \parallel u_3$, 
since $w'' \downarrow \Sigma_2=aefd$, not $u_2$.

The definition of shuffle can be extended from words to languages. 
We use the same notation $\parallel$ for the shuffle of sets, as 
well as for the shuffle of words. 

The asynchronous shuffle of two languages $L_1, L_2$ is defined as 
$L_1 \parallel L_2=\{w_1 \parallel w_2 \mid w_1 \in L_1, w_2 \in L_2\}$. For example,
if $L_1=\{abcf, abbf\}$ is a language over $\Sigma_1=\{a,b,c,f\}$ and  $L_2=\{adfe\}$ is a language 
over $\{a,d,e,f\}$, then $L_1 \parallel L_2=\{abcf \parallel adfe, abbf \parallel adfe\}$
=$\{abcdfe, adbcfe, abdcfe, abbdfe, abdbfe, adbbfe\}$.

\begin{defn}
\label{prod2}
Let $M_i=(Q_i, q_i, \Sigma_i, \delta_i)$ and 
$M_j=(Q_j, q_j, \Sigma_j, \delta_j)$ be complete FSMs.
The asynchronous product of $M_i, M_j$ is defined as the FSM 
$M_i \parallel M_j=(Q_i \times Q_j, (q_i, q_j), \Sigma_i \cup \Sigma_j, \delta)$ where 
\begin{enumerate}
 \item $\delta((q,q'),a)=(\delta_i(q,a), \delta_j(q',a)), a \in \Sigma_i \cap \Sigma_j$, 
\item  $\delta((q,q'),a)=(\delta_i(q,a), q'), a \in \Sigma_i, a \notin \Sigma_j$, 
\item $\delta((q,q'),a)=(q, \delta_j(q',a)), a \in \Sigma_j, a \notin \Sigma_i$. 
\end{enumerate}
On the common events, both FSMs move in parallel; otherwise, they move independent of each other.
\end{defn}
It is known that $L(M_i \parallel M_j)=L(M_i) \parallel L(M_j)$.  
Now we start the proof of Lemma \ref{lemma:comp1}.\\
Consider a valid configuration $\pi$ of ${\cal A}_1 \parallel {\cal A}_2$. 
As seen in Lemma \ref{lemma:comp}, we can find valid configurations $\pi_1$ 
of ${\cal A}_1$ and  $\pi_2$ 
of ${\cal A}_2$  such that $\pi=\pi_1+\pi_2$. 
The initial state of ${\cal A}_1 \parallel {\cal A}_2$ is $(q_0^1, q_0^2)$, where
 $q^0_1$ is the initial state of 
${\cal A}_1 $ and $q^0_2$ is the initial state of 
${\cal A}_2$. By definitions \ref{def_compo_fsmc}
and \ref{prod2},  if we consider a string 
$w=a_1 a_2 \dots a_n \in L[{\cal A}_1 \parallel {\cal A}_2] \downarrow \pi$, then we can find  strings
 $w_1  \in L({\cal A}_1 \downarrow \pi)=L({\cal A}_1 \downarrow \pi_1)$ and  
$w_2 \in L({\cal A}_2 \downarrow \pi)=L({\cal A}_2 \downarrow \pi_2)$ such that 
$w = w_1 \parallel w_2$ in the sense of definition \ref{prod}. Hence, 
 $L[{\cal A}_1 \parallel {\cal A}_2] \downarrow \pi \subseteq 
L({\cal A}_1 \downarrow \pi) \parallel L({\cal A}_2 \downarrow \pi)$. The converse can be shown 
in a similar way.
 \qed
\end{proof}

\subsubsection{Refinement and Parallel Composition}
 \label{sssec_lr_of_cfmsv}
The definition of parallel composition naturally lends itself to a notion of
addition of conformance mappings between design and requirement pairs. 
Consider FSMr's $R_1, R_2$ corresponding to two features $f_1, f_2$. Let 
$D_1, D_2$ be the corresponding FSMd's. Let $\rho^r_1, \rho^r_2$ be the global predicates 
of $R_1, R_2$, and let  $\rho^d_1, \rho^d_2$ be the global predicates 
of $D_1, D_2$ respectively. 
Assume that $D_1 \leq_{\Phi_1}R_1$ and 
$D_2 \leq_{\Phi_2}R_2$. 
Let $\rho^r=\rho^r_{12} \wedge \rho^r_1 \wedge \rho^r_2$ be the 
global predicate of $R_1 \parallel R_2$; likewise, let 
 $\rho^d=\rho^d_{12} \wedge \rho^d_1 \wedge \rho^d_2$ be the 
global predicate of $D_1 \parallel D_2$.  We now want to ask if $D_1 \parallel D_2$ 
conforms to $R_1 \parallel R_2$. This amounts to computing a conformance mapping between 
$D_1 \parallel D_2$ and $R_1 \parallel R_2$ given $\Phi_1, \Phi_2$.
Consider any valid configuration $\pi^d$ of $D_1 \parallel D_2$. By Lemma \ref{lemma:comp}, 
we can write $\pi^d$ as $\pi^d_1+\pi^d_2$, where $\pi^d_1, \pi^d_2$ are valid configurations of $D_1, D_2$ respectively. 
Since $D_1 \leq_{\Phi_1}R_1$ and 
$D_2 \leq_{\Phi_2}R_2$, there exists valid configurations $\pi^r_1 \in \Phi_1(\pi^d_1)$ and 
$\pi^r_2 \in \Phi_2(\pi^d_2)$ in $R_1, R_2$ respectively. Given this, the addition 
of $\Phi_1, \Phi_2$ is defined as follows:

\begin{defn}[Addition of conformance mappings]
\label{def_addconfmapp}
The addition of conformance mappings $\Phi_1, \Phi_2$ is defined to be a mapping 
$\Phi=\Phi_1+\Phi_2$ as follows. 
For every valid configuration $\pi^d=\pi^d_1+\pi^d_2$ of $D_1 \parallel D_2$, 
\begin{eqnarray*}
\Phi(\pi^d)=\{\pi^r \mid \pi^r~\mbox{is a valid configuration of}~
R_1 \parallel R_2, \pi^r=\pi_1^r+\pi_2^r\\
~\mbox{for valid configurations}~\pi^r_1 \in \Phi_1(\pi^d_1),  \pi^r_2 \in \Phi_2(\pi^d_2)\}
\end{eqnarray*}
\end{defn}

\begin{lem}[Conformance of composition]
\label{lemma:conform}
Let $R_1$ and $R_2$ be two FSMr machines corresponding to features $f_1, f_2$, and let 
$D_1$ and $D_2$ be the corresponding FSMd machines.  
Let $D_1 \le_{\Phi_1} R_1$ and $D_2\le_{\Phi_2} R_2$.
Let $\Phi=\Phi_1+\Phi_2$ and $\pi^d$ be a valid configuration of 
$D_1\parallel D_2$. Then,  $\forall \pi^r \in \Phi(\pi^d)$, 
$L([(D_1 \parallel D_2) \downarrow \pi^d]) \subseteq L([ (R_1\parallel  R_2) \downarrow \pi^r])$.
\end{lem}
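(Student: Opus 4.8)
The plan is to reduce the statement entirely to the two composition lemmas already established, together with a monotonicity property of the asynchronous shuffle. First I would unfold the hypotheses at the level of configurations. Fix a valid configuration $\pi^d$ of $D_1 \parallel D_2$ and a requirement configuration $\pi^r \in \Phi(\pi^d)$. By Lemma \ref{lemma:comp}, $\pi^d$ decomposes as $\pi^d = \pi^d_1 + \pi^d_2$ for valid configurations $\pi^d_1$ of $D_1$ and $\pi^d_2$ of $D_2$. By Definition \ref{def_addconfmapp}, membership $\pi^r \in \Phi(\pi^d) = (\Phi_1 + \Phi_2)(\pi^d)$ supplies a matching decomposition $\pi^r = \pi^r_1 + \pi^r_2$ with $\pi^r_1 \in \Phi_1(\pi^d_1)$ and $\pi^r_2 \in \Phi_2(\pi^d_2)$.

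Next I would invoke the per-feature conformance. Since $D_1 \le_{\Phi_1} R_1$ and $\pi^r_1 \in \Phi_1(\pi^d_1)$, the defining property of the conformance mapping (Definition \ref{def_mapcalvp}) yields $L(D_1 \downarrow \pi^d_1) \subseteq L(R_1 \downarrow \pi^r_1)$, and symmetrically $L(D_2 \downarrow \pi^d_2) \subseteq L(R_2 \downarrow \pi^r_2)$. These are the only two facts drawn from the single-feature theory.

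Then I would rewrite both sides of the desired inclusion as shuffles using Lemma \ref{lemma:comp1}. Since the guards of $D_1$ (resp.\ $D_2$) mention only the variables $Var_1$ (resp.\ $Var_2$), and $\pi^d$ agrees with $\pi^d_1$ on $Var_1$ and with $\pi^d_2$ on $Var_2$ (with $Var_1 \cap Var_2 = \emptyset$), a transition of $D_1$ is retained under $\pi^d$ exactly when it is retained under $\pi^d_1$; hence $L(D_1 \downarrow \pi^d) = L(D_1 \downarrow \pi^d_1)$ and $L(D_2 \downarrow \pi^d) = L(D_2 \downarrow \pi^d_2)$ — precisely the identity already exploited inside the proof of Lemma \ref{lemma:comp1}. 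Lemma \ref{lemma:comp1} then gives
\[
L((D_1 \parallel D_2) \downarrow \pi^d) = L(D_1 \downarrow \pi^d_1) \parallel L(D_2 \downarrow \pi^d_2),
\]
and likewise $L((R_1 \parallel R_2) \downarrow \pi^r) = L(R_1 \downarrow \pi^r_1) \parallel L(R_2 \downarrow \pi^r_2)$. The final ingredient is monotonicity of shuffle: from the definition $L \parallel L' = \{\, w \parallel w' \mid w \in L,\ w' \in L' \,\}$, the inclusions $L_1 \subseteq L_1'$ and $L_2 \subseteq L_2'$ immediately give $L_1 \parallel L_2 \subseteq L_1' \parallel L_2'$. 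Chaining this with the two per-feature inclusions closes the argument.

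The step I expect to need the most care is the shuffle rewriting, specifically justifying that the asynchronous product on the design side synchronizes on exactly the same handshake set as on the requirement side. This relies on the fact that the FSMd and FSMr of one feature share the event alphabet — they differ only in how variability is encoded — so $\Sigma^d_1 = \Sigma^r_1$ and $\Sigma^d_2 = \Sigma^r_2$, whence $H = \Sigma_1 \cap \Sigma_2$ is common to both compositions. Without this compatibility the two occurrences of $\parallel$ would denote shuffles over different alphabets and the monotonicity step would not line up. Once the alphabet agreement is noted, the rest is a routine chaining of inclusions, and no combinatorial work beyond Lemmas \ref{lemma:comp} and \ref{lemma:comp1} is required.
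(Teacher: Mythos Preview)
Your proposal is correct and follows essentially the same route as the paper: decompose $\pi^d$ and $\pi^r$ via Lemma~\ref{lemma:comp} and Definition~\ref{def_addconfmapp}, invoke the per-feature language inclusions coming from $\Phi_1,\Phi_2$, rewrite both sides as shuffles via Lemma~\ref{lemma:comp1}, and conclude by monotonicity of the shuffle. You are in fact slightly more careful than the paper, which leaves the alphabet-compatibility and shuffle-monotonicity points implicit.
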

\begin{proof}
 Given a valid configuration $\pi^d$ of $D_1 \parallel D_2$, we can write it as 
$\pi^d_1+\pi^d_2$, where $\pi^d_1, \pi^d_2$ are respectively valid configurations of $D_1, D_2$ (Lemma \ref{lemma:comp}). 
Since 
$D_1 \leq_{\Phi_1} R_1$ and $D_2 \leq_{\Phi_2} R_2$, there exist valid configurations $\pi^r_1 \in \Phi_1(\pi^d_1)$ and 
$\pi^r_2 \in \Phi_2(\pi^d_2)$ such that $L(D_1 \downarrow \pi^d_1) \subseteq L(R_1 \downarrow \pi^r_1)$ and 
$L(D_2 \downarrow \pi^d_2) \subseteq L(R_2 \downarrow \pi^r_2)$.

Since $\Phi$ has been computed, for every valid configuration $\pi^d$ 
of $D_1 \parallel D_2$, there exists some valid configuration $\pi^r$ of $R_1 \parallel R_2$, $\pi^r \in \Phi(\pi^d)$.
As $\pi^r$ is valid,  $\pi^r \models \rho^r_{12} \wedge \rho^r_1 \wedge \rho^r_2$; 
hence,  $\pi^r$ can be written as  
$\pi^r_1+\pi^r_2$, where $\pi^r_1, \pi^r_2$ are respectively valid configurations of $R_1, R_2$ (Lemma \ref{lemma:comp}), 
and $\pi^r_1 \in \Phi_1(\pi^d_1)$, $\pi^r_2 \in \Phi_2(\pi^d_2)$ by definition \ref{def_addconfmapp}.

$L([(D_1 \parallel D_2) \downarrow \pi^d])=L(D_1 \downarrow \pi^d_1) \parallel L(D_2 \downarrow \pi^d_2)$
by lemma \ref{lemma:comp1}. Similarly,   
$L([(R_1 \parallel R_2)\downarrow \pi^r])=L(R_1 \downarrow \pi^r_1) \parallel L(R_2 \downarrow \pi^r_2)$.  This along with 
the observation that $L(D_1 \downarrow \pi^d_1) \subseteq L(R_1 \downarrow \pi^r_1)$ and 
$L(D_2 \downarrow \pi^d_2) \subseteq L(R_2 \downarrow \pi^r_2)$ gives 
$L([(D_1 \parallel D_2)\downarrow \pi^d]) \subseteq 
L([(R_1 \parallel R_2)\downarrow \pi^r])$.
\qed
\end{proof}


Considering the example, in the FSMr $Req_{dl} \parallel Req_{du}$ 
with  $\rho_r: DL\_Enable=DU\_Enable \wedge Transmission_{dl}=Transmission_{du}$, 
Any configuration where $DL\_Enable=Enable$ but $DU\_Enable=Disable$ is invalid.
However, $\Phi(\tuple{Auto,Speed})$ contains only configurations where $DL\_Enable=Enable$,
$\Phi'(\tuple{Moff,Poff})$ contains only configurations where $DU\_Enable=Disable$
and $\tuple{Auto,Speed}+\tuple{Moff,Poff}$ is a valid configuration of 
$Des_{dl} \parallel Des_{du}$. So the design does not conform to the requirement. 
 However, if we make the extra assumption 
that $\rho_d:Cp1=Moff\wedge Cp2=Poff \Leftrightarrow Cp3=Moff\wedge Cp4=Poff$, 
then $\tuple{Auto,Speed}$ and $\tuple{Moff,Poff}$ are not 
compatible anymore and as a result the design conforms to the requirement.

\subsection{Conformance Checking}
\label{ssec_splmodeling}

Let $F=\{f_1,...,f_n\}$ be a set of features and ${\cal F}$ be the complete system 
comprising the features in $F$, along with the relations between the features. 
Let $R_{i}$ be the FSMr modeling the expected 
behavior and variability of $f_i$, and $D_{i}$ the FSMd extracted from the design of $f_i$.  
Let $\rho_{12 \dots n}^r$ and $\rho_{12 \dots n}^d$ be the compositional 
predicates for $R_1 \parallel \dots \parallel R_n$ and $D_1 \parallel \dots \parallel D_n$ respectively. 
Now we  state the variability conformance problem for an SPL as follows:
Does there exist a conformance mapping $\Phi$ such that $
D_1 \parallel \dots \parallel D_n \le_{\Phi} R_{1} \parallel \dots \parallel \dots R_n$?
A compositional approach to solve the problem is to:\\
(i) check whether the design of every feature conforms to its requirement using Algorithm~1; 
(ii) check whether every valid configuration of $D_1 \parallel \dots \parallel D_n$ can be mapped to a valid configuration of
$R_1 \parallel \dots \parallel R_n$. This is the conformance condition.
\subsection{Checking Conformance Using QBF} 
We implement the second check using QBF solving. Given FSMd's $D_1, \dots, D_n$ and 
FSMr's $R_1, \dots, R_n$, \\
(1) Let $Var(D_i)=\{v^d_{i1}, \dots, v^d_{in}\}$ be the set of variables of design $D_i$, and 
$Var(R_i)=\{v^r_{i1}, \dots, v^r_{im}\}$, the set of variables of requirement $R_i$.
Let $\pi^d: (v^d_{i1}=a_1, \dots, v^d_{in}=a_n)$ be a configuration of $D_i$. 
 We denote by $\pi^d_i(x_{i1}, \dots, x_{in})$ 
 a formula which takes $n$ values 
from $Dom(D_i), 1 \leq i \leq n$ as arguments. 
 If $(v^d_{i1}=a_1, \dots, v^d_{in}=a_n)$ is a chosen assignment, then 
$\pi^d_i(x_{i1}, \dots, x_{in})$  is the conjunction $\bigwedge_{j=1}^n(x_{ij}=a_j)$; \\
 (2) Given $n$ FSMd's and $n$ FSMr's check if $D_i$ conforms to $R_i$ for all $1 \leq i \leq n$ using Algorithm 1.
 This gives the map $\Phi_i$. Assume $\Phi_i(\pi^d_i)=\{\pi^r_{i1}, \dots, \pi^r_{im}\}$, 
where each of $\pi^r_{i1}, \dots, \pi^r_{im}$ are configurations of $R_i$, that have been mapped by $\Phi_i$ 
to some configuration $\pi_i^d$ of $D_i$.  \\
 (3) We encode the above conformance mapping using the formula \\
$\Phi_i(x_{i1},x_{i2}, \dots, x_{in})= \bigvee_{j=1}^m \pi_{ij}^r(y_{i1}, \dots, y_{il})$, where 
$x_{ij}$ takes values from 
$Dom(v^d_{ij})$, and $y_{ij}$ from  $Dom(v^r_{ij})$.\\
(4) Let $\varphi^d_{i,j}=\rho^d \wedge \rho^d_i \wedge \rho^d_j$ and 
$\varphi^r_{i,j}=\rho^r \wedge \rho^r_i \wedge \rho^r_j$ represent respectively the propositional formulae  
which ensures consistency of the global predicates of $D_i, D_j$ 
and $R_i, R_j$ along with the compositional predicates $\rho^d$ and $\rho^r$.
Given a set $S \subseteq \{1, 2, \dots, n\}$,  $\varphi^d_S$ and 
$\varphi^r_S$ can be appropriately written.\\
The QBF formula 
for conformance checking is given by 
$$\Psi=\forall x_{11} \dots x_{ni_n}[
\varphi^d_{1,2,\dots,n}
 \Rightarrow 
\exists y_{11} \dots y_{nj_n}( 
 \Phi_1 \wedge \dots \wedge \Phi_n \wedge \varphi^r_{1,2, \dots, n})]$$

\begin{theorem}
\label{thm:main}
 Given a SPL, let $\{f_1, \dots, f_n\}$ be the set 
of features in a chosen product. Let  $D_i, R_i$ be the FSMd and FSMr for feature $f_i$. 
Then $D_1 \parallel \dots \parallel D_n$ conforms to $R_1 \parallel \dots \parallel R_n$ iff 
$\Psi$ holds. 
\end{theorem}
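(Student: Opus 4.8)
The plan is to prove the biconditional by first translating the quantifier structure of $\Psi$ into a statement about configurations and then feeding that statement into the compositional lemmas already established. I would begin by reading off the semantics of $\Psi$: the block $\forall x_{11}\dots x_{ni_n}$ ranges over all assignments to the design variables, i.e.\ over candidate configurations of $D_1\parallel\dots\parallel D_n$, and the guard $\varphi^d_{1,2,\dots,n}=\rho^d\wedge\rho^d_1\wedge\dots\wedge\rho^d_n$ restricts attention to the \emph{valid} composite design configurations $\pi^d$. Dually, the inner $\exists y_{11}\dots y_{nj_n}$ together with $\varphi^r_{1,2,\dots,n}$ asserts existence of a valid composite requirement configuration $\pi^r$, while each conjunct $\Phi_i$ encodes, by step~(3) of the construction, exactly the membership $\pi^r_i\in\Phi_i(\pi^d_i)$. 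Hence $\Psi$ is equivalent to: for every valid $\pi^d$ of $D_1\parallel\dots\parallel D_n$ there is a valid $\pi^r$ of $R_1\parallel\dots\parallel R_n$ with $\pi^r\in(\Phi_1+\dots+\Phi_n)(\pi^d)$. I would also record, from Lemma~\ref{lemma:spin} and the fact that Algorithm~1 retains \emph{every} non-violating pair, that $\pi^r_i\in\Phi_i(\pi^d_i)$ holds iff $L(D_i\downarrow\pi^d_i)\subseteq L(R_i\downarrow\pi^r_i)$.

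Next I would lift the two-feature lemmas to $n$ features by a routine induction on $n$: Lemma~\ref{lemma:comp} gives that every valid $\pi^d$ decomposes as $\pi^d_1+\dots+\pi^d_n$ (and similarly for $\pi^r$), and Lemma~\ref{lemma:comp1} gives $L((D_1\parallel\dots\parallel D_n)\downarrow\pi^d)=L(D_1\downarrow\pi^d_1)\parallel\dots\parallel L(D_n\downarrow\pi^d_n)$, with the analogous identity on the requirement side. For the $(\Leftarrow)$ direction (soundness of the check) I would assume $\Psi$ and fix an arbitrary valid $\pi^d=\pi^d_1+\dots+\pi^d_n$; the translation yields a valid $\pi^r=\pi^r_1+\dots+\pi^r_n$ with $L(D_i\downarrow\pi^d_i)\subseteq L(R_i\downarrow\pi^r_i)$ for every $i$. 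Since shuffle is monotone (any $w$ in the left shuffle satisfies $w\downarrow\Sigma_i\in L(D_i\downarrow\pi^d_i)\subseteq L(R_i\downarrow\pi^r_i)$, so $w$ lies in the right shuffle), the two identities give $L((D_1\parallel\dots\parallel D_n)\downarrow\pi^d)\subseteq L((R_1\parallel\dots\parallel R_n)\downarrow\pi^r)$, which is the $n$-ary form of Lemma~\ref{lemma:conform}. As $\pi^d$ was arbitrary and its image is nonempty, $\Phi_1+\dots+\Phi_n$ is a conformance mapping and the composite conforms.

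For the $(\Rightarrow)$ direction I would argue contrapositively. Assuming $\Psi$ fails, fix a valid $\pi^d$ such that every valid composite $\pi^r=\pi^r_1+\dots+\pi^r_n$ has some failing component, i.e.\ $L(D_i\downarrow\pi^d_i)\not\subseteq L(R_i\downarrow\pi^r_i)$ for some $i$; I must show no conformance mapping can cover $\pi^d$, so the composite does not conform. Writing $A_k=L(D_k\downarrow\pi^d_k)$ and $B_k=L(R_k\downarrow\pi^r_k)$, I would pick a word $a\in A_i\setminus B_i$ and extend it to a composite word $w$ with $w\downarrow\Sigma_i=a$: by Lemma~\ref{lemma:comp1} such a $w$ lies in $L((D_1\parallel\dots\parallel D_n)\downarrow\pi^d)=A_1\parallel\dots\parallel A_n$, while $w\notin B_1\parallel\dots\parallel B_n$ because its $\Sigma_i$-projection $a$ already escapes $B_i$. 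This witnesses $L((D_1\parallel\dots)\downarrow\pi^d)\not\subseteq L((R_1\parallel\dots)\downarrow\pi^r)$ for the chosen $\pi^r$; since $\pi^r$ was arbitrary, $\pi^d$ has no conforming requirement configuration.

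The hard part is exactly the extension step just invoked: turning a local word $a\in L(D_i\downarrow\pi^d_i)$ into a composite word $w$ with $w\downarrow\Sigma_i=a$. This is immediate when the feature alphabets are disjoint (shuffle in the empty words of the other components), but the handshaking events $H=\Sigma_i\cap\Sigma_j$ force the other components to synchronise on the shared letters occurring in $a$, so the extension exists only if every local behaviour of $D_i$ can be completed to a synchronised run of the whole composition. I would therefore isolate this as a projection/non-blocking property — concretely, that $L(D_i\downarrow\pi^d_i)$ coincides with the $\Sigma_i$-projection of $L((D_1\parallel\dots\parallel D_n)\downarrow\pi^d)$ — and establish it directly from the transition rules of Definition~\ref{def_compo_fsmc} (clause~(1) supplying the synchronised handshakes and clauses~(2)--(3) the independent local moves), using prefix-closedness of the FSM languages so that an empty completion is available for the purely local letters. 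Everything else reduces to the bookkeeping of decomposing configurations and projecting words, which the established lemmas already handle.
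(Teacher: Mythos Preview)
Your $(\Leftarrow)$ direction is correct and is essentially the paper's argument: it is the $n$-ary extension of Lemma~\ref{lemma:conform}, obtained by decomposing configurations (Lemma~\ref{lemma:comp}), factoring the composite language as a shuffle (Lemma~\ref{lemma:comp1}), and invoking monotonicity of shuffle under componentwise inclusion.

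The $(\Rightarrow)$ direction, however, has a genuine gap. You correctly isolate the crux --- the ``projection/non-blocking'' property that every word $a\in L(D_i\downarrow\pi^d_i)$ extends to a composite word $w$ with $w\downarrow\Sigma_i=a$ --- and you propose to derive it from the transition rules of Definition~\ref{def_compo_fsmc}. But that property does \emph{not} follow from those rules: clause~(1) requires both components to move together on a shared event, so if $D_i$ can emit a handshake symbol that $D_j$ cannot match from its current state, the composite simply blocks and the local word has no extension. Concretely, take $\Sigma_1=\Sigma_2=\{a\}$, $L(D_1\downarrow\pi^d_1)=\{\epsilon,a\}$, $L(D_2\downarrow\pi^d_2)=\{\epsilon\}$; the shuffle is $\{\epsilon\}$, and the word $a$ lifts to nothing. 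Prefix-closedness is irrelevant here, since the obstruction is synchronisation, not acceptance. Without this lifting step your contrapositive collapses: a local failure $L(D_i\downarrow\pi^d_i)\not\subseteq L(R_i\downarrow\pi^r_i)$ need not propagate to the composite, and one can manufacture instances where $\Psi$ fails yet the composite still conforms (in the example above, give $R_1$ two configurations with languages $\{\epsilon,a\}$ and $\{\epsilon\}$, let $\rho^r_{12}$ rule out the first, and observe that the composite design language $\{\epsilon\}$ is trivially covered by the remaining requirement variant). The paper's own proof of this direction does not confront the issue either --- it silently assumes that the witnessing composite $\pi^r$ decomposes into components already lying in the $\Phi_i$, which is precisely the point in question --- so you are right to flag the difficulty; but the lemma you propose to prove is false as stated, and a sound argument for this direction needs an additional hypothesis (e.g.\ pairwise disjoint feature alphabets, or an explicit non-blocking assumption on the design composition) that the paper does not supply.
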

\begin{proof}
Given $D_i \leq_{\Phi} R_i$, assume that $D_1 \parallel \dots \parallel D_n$ conforms to $R_1 \parallel \dots \parallel R_n$. 
Then, by definition of conformance, it means that for all valid configurations $\pi^d$ of 
$D_1 \parallel \dots \parallel D_n$, there exists a valid configuration $\pi^r$ of $R_1 \parallel \dots \parallel R_n$ such that
$L([D_1 \parallel \dots \parallel D_n]\downarrow \pi^d) \subseteq L([R_1 \parallel \dots \parallel R_n]\downarrow \pi^r)$. 
Let $\Phi$ be the conformance mapping such that $\pi^r$ $\in \Phi(\pi^d)$. 
 
$\pi^d$ is a valid configuration of $D_1 \parallel \dots \parallel D_n$ implies 
that $\pi^d \models \bigwedge_{S \subseteq \{1,2,\dots,n\}} \rho^d_S$, where $\rho^d_S$ is the 
global predicate of $D_{i_1} \parallel \dots \parallel D_{i_j}$, when $S=\{i_1, \dots, i_j\}$. 
Using Lemma \ref{lemma:comp} repeatedly, we can then say that 
$\pi^d=\pi^d_1 + \dots + \pi^d_n$ for valid configurations $\pi^d_i$ of $D_i$. 
Since $\pi^r \in \Phi(\pi^d)$, by definition of conformance mappings, $\pi^r$ must be a valid
 configuration of $R_1 \parallel \dots \parallel R_n$, hence $\pi^r=\pi^r_1+\dots+\pi^r_n$ (Lemma \ref{lemma:comp}),
such that 
$\pi^d_i \in \Phi(\pi^r_i)$, for valid configurations $\pi^r_i$ of $R_i$. 
$\pi^r$ is valid means $\pi^r \models \bigwedge_{S \subseteq \{1,2,\dots,n\}} \rho^r_S$.

Given the above, we show that the QBF $\Psi$ holds. 
The LHS of the QBF $\Psi$ is the formula $\varphi_{1,2,\dots,n}^d$, which  is the conjunction 
$\rho^d_S$ for all subsets $S$ of $\{1,2,\dots,n\}$. The forall quantifier outside would 
thus evaluate all configurations of $D_1 \parallel \dots \parallel D_n$ that satisfy 
$\varphi_{1,2,\dots,n}^d$; that is, which satisfy 
$\bigwedge_{S \subseteq \{1,2,\dots,n\}} \rho^d_S$ : hence, all valid configurations of $D_1 \parallel \dots \parallel D_n$.

For the QBF to hold good, for all valid configurations of $D_1 \parallel \dots \parallel D_n$ that have been 
evaluated on the LHS, we must find some configuration of $R_1 \parallel \dots \parallel R_n$ that 
satisfies $\Phi_1 \wedge \dots \wedge \Phi_n \wedge \varphi_{1,2,\dots,n}^r$ : 
(i) any configuration $\pi$ of $R_1 \parallel \dots \parallel R_n$ that satisfies 
$\varphi_{1,2,\dots,n}^r$ would be valid; (ii) further, 
if it has to satisfy   $\Phi_1 \wedge \dots \wedge \Phi_n$, 
it must agree with $\pi^r_i \in \Phi_i(\pi^d_i)$
over $Var(R_i)$ for all $1 \leq i \leq n$. By Lemma \ref{lemma:comp}, this means that 
$\pi$ can be written as $\pi^r_1+\dots+\pi^r_n$. Thus, for the QBF to hold, 
we must be able to find for each valid configuration 
$\pi^d$ of $D_1 \parallel \dots \parallel D_n$, 
a valid configuration $\pi^r$
of $R_1\parallel \dots \parallel R_n$ which can be written as 
$\pi^r_1+\dots+\pi^r_n$, where $\pi^r_i \in \Phi_i(\pi^d_i)$ for each $i$. 
But this is exactly what the mapping $\Phi$ which 
checks for conformance of $D_1 \parallel \dots \parallel D_n$ with 
$R_1 \parallel \dots \parallel R_n$ does. Since we assume that $\Phi$ exists, 
the QBF holds.

The converse can be shown in a similar way : that is, if the QBF formula $\Psi$ holds, then 
$D_1 \parallel \dots \parallel D_n$ will conform to $R_1 \parallel \dots \parallel R_n$.
\qed
\end{proof}

\section{Implementation and Case Studies}
\label{ssec_ecpl}

\begin{figure}[hb]
 \begin{center}
   {\includegraphics[width=0.7\textwidth]{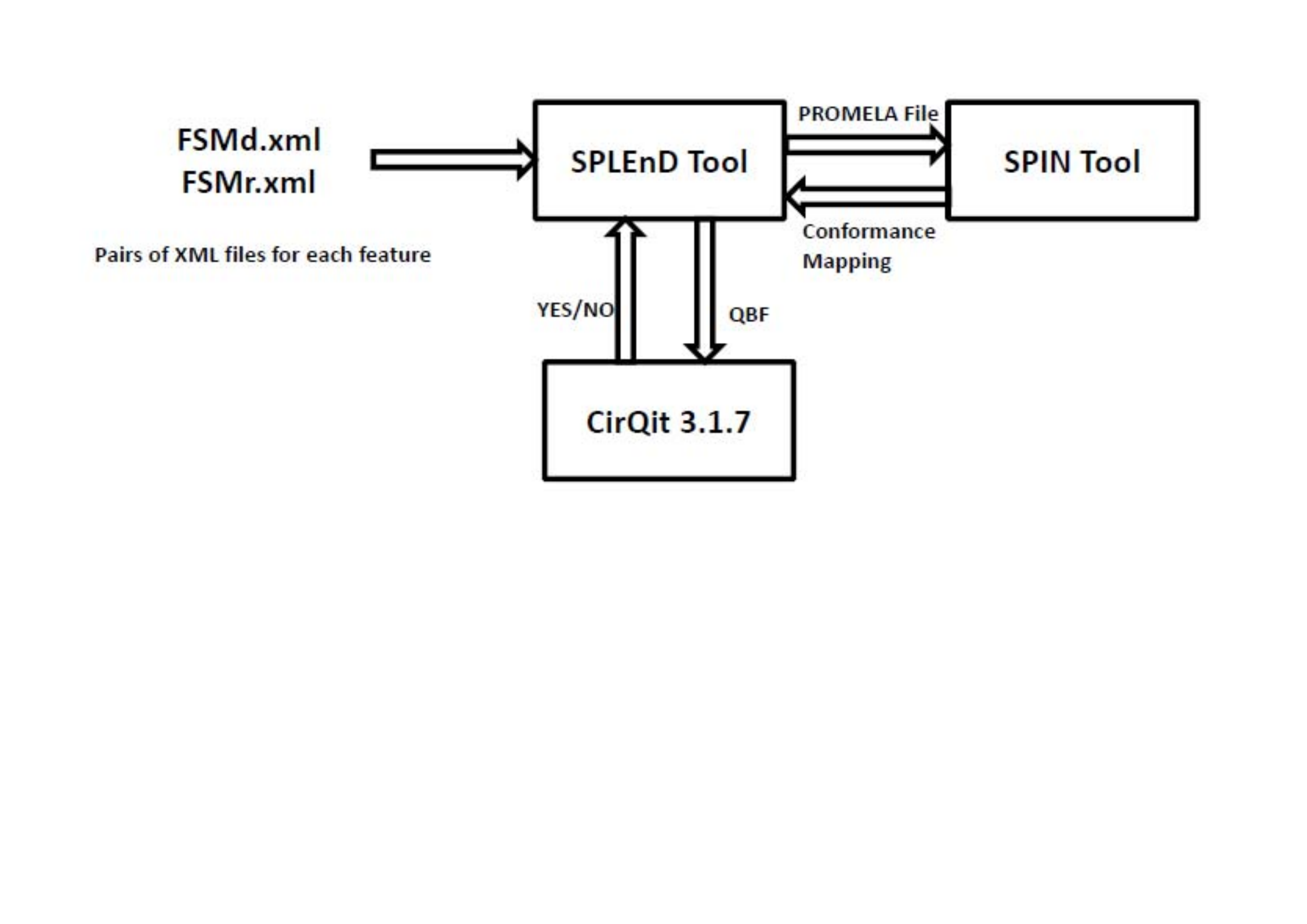}}
\vspace{-2.5cm}
  \caption{Overview of SPLEnD} 
  \label{splend}
  \end{center}
\end{figure}

Figure \ref{splend} pictorially describes the tool SPLEnD. It takes as input, a 
pair of xml files corresponding to FSMd, FSMr and outputs 
a PROMELA file. The latter is fed to SPIN, which returns the conformance 
mappings, or declares non-conformance; given the conformance mapping the 
tool computes a QBF formula $\Psi$ which is fed to CirQit. 

We considered two real case studies for our experimentation: 
Entry Control Product Line,  ECPL having 7 features and  
Banking Software Product Line, BSPL, composed of  25 features. 
The details of the ECPL and BSPL case 
studies are given below.  The FSMr, FSMd models of each feature contains less than $10$ states.

\section{ECPL and BSPL}
In this section, we describe the two product lines that have been considered in the paper : (i) ECPL and 
(ii) BSPL.
\subsection{ECPL}
\label{ecpl} 
The Entry Control Product Line comprises all the features involved in the 
management of the locks in a car. In this study, we focus on the following 
features:
\begin{itemize}
 \item {\em Power lock}: this is the basic locking functionality which manages 
the locking/unlocking according to key button press and courtesy switch press,  
\item {\em Last Door Closed Lock}: delays the locking of the doors until all 
the doors are closed. It is applicable when the lock command appends while a door is open,
\item {\em Door lock}: automates the locking of doors when the vehicle starts, 
\item {\em Door unlock}: automates the unlocking of door(s) when the vehicle 
stops,
\item {\em Anti-lockout}: is intended to prevent the inadvertent lockout 
situations: the driver is out of the car with the key inside and all the doors 
locked, 
\item {\em Post crash unlock}: unlocks all the doors in a post crash situation,
\item {\em Theft security lock}: secures the car with a second lock.\\
\end{itemize}
Each feature is represented as a pair of state machines containing $3$ to $10$ states.
\subsubsection{The variability constraints of the ECPL}
Figure \ref{figure_fm} 
presents the feature diagram of the ECPL (a la Czarnecki \cite{Czarnecki00}).
 This diagram presents the variability constraints of the ECPL at the requirement level ($\rho_{f_0}$).
 All the constraints represented  by this diagram have to be considered during composition to guarantee the 
overall consistency of the SPL behavior. 
The dark gray boxes are features of the ECPL: 
{\em Power lock}, {\em Anti-lockout}, {\em Door lock}, {\em Door unlock}, and {\em Post crash unlock}. 
The light gray boxes are configurations. The black arrow from the ``Manual" configuration to the ``Shift out of park" 
configuration and to the ``Shift into park" configuration says that if the transmission is manual, 
the targeted configurations cannot be selected. i.e. In ``Manual" configuration, there is no ``park" gear.

\begin{figure}[htbp]
 \begin{center}
   {\includegraphics[width=0.9\textwidth]{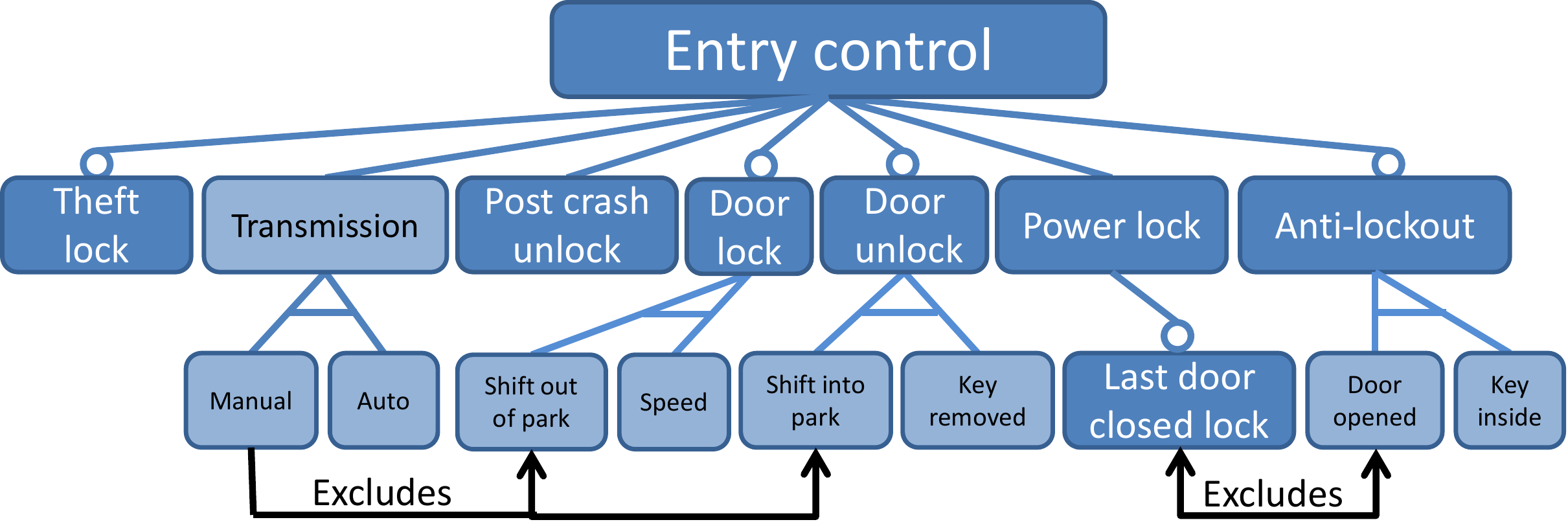}}
  \caption{The feature diagram of the ECPL.} 
  \label{figure_fm}
  \end{center}
\end{figure}

\subsection{BSPL}
\label{bspl}

The Banking Software Product Line (BSPL) consists of 25 behavioral features. 
The BSPL is used to derived the software for ATM, Bank, Online Banking and Mobile Banking.
 Figure \ref{figure_bspl_fm} presents the feature diagram of the BSPL.

\begin{figure}[!ht]
 \begin{center}
   {\includegraphics[scale=0.4]{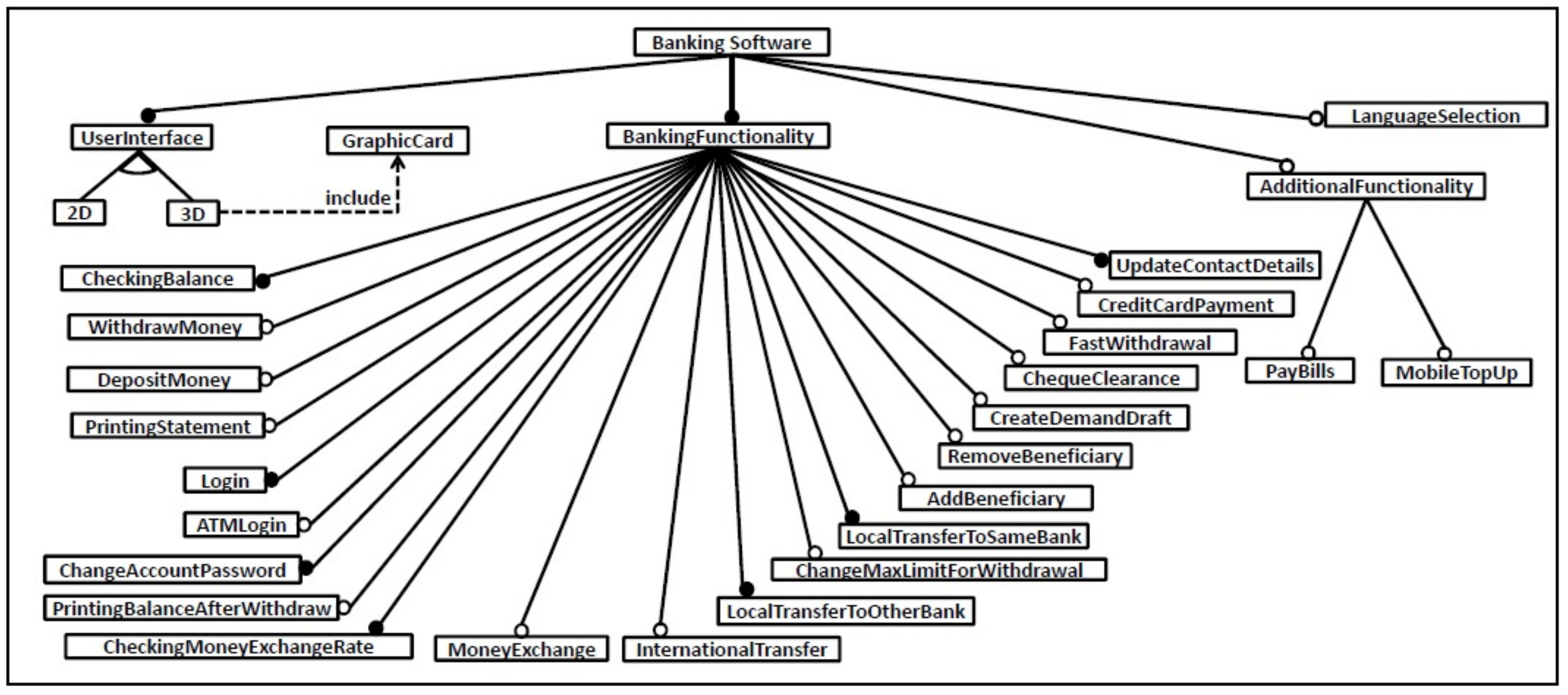}}
  \caption{The feature diagram of the BSPL.} 
  \label{figure_bspl_fm}
  \end{center}
\end{figure}

\vspace{0.5in}
Similar to ECPL, we ran Algorithm 1 on all the 25 features of BSPL. 
In section \ref{ssec_ecpl}, Figure \ref{bsplResultsTable} presents the number of design 
configurations and execution time of Algorithm 1 for each feature. 
In the following, we elaborate on the FSMv of 2 features: (i) User Interface and (ii) Withdraw Money. 
The FSMd/FSMr for all the features has states between 2 and 10 (both inclusive).
Figure \ref{figure_ui_fsmr} is the FSMr for feature $User$ $Interface$, 
which has $UI$ as an event with global predicate 
$\rho = \{ \neg (uip = Disable)\}$. There is only one boolean variable, $Var = \{uip\}$, $uip$ takes values from $\{Enable, Disable\}$.
\begin{figure}[htbp]
 \begin{center}
   {\includegraphics[scale=0.15]{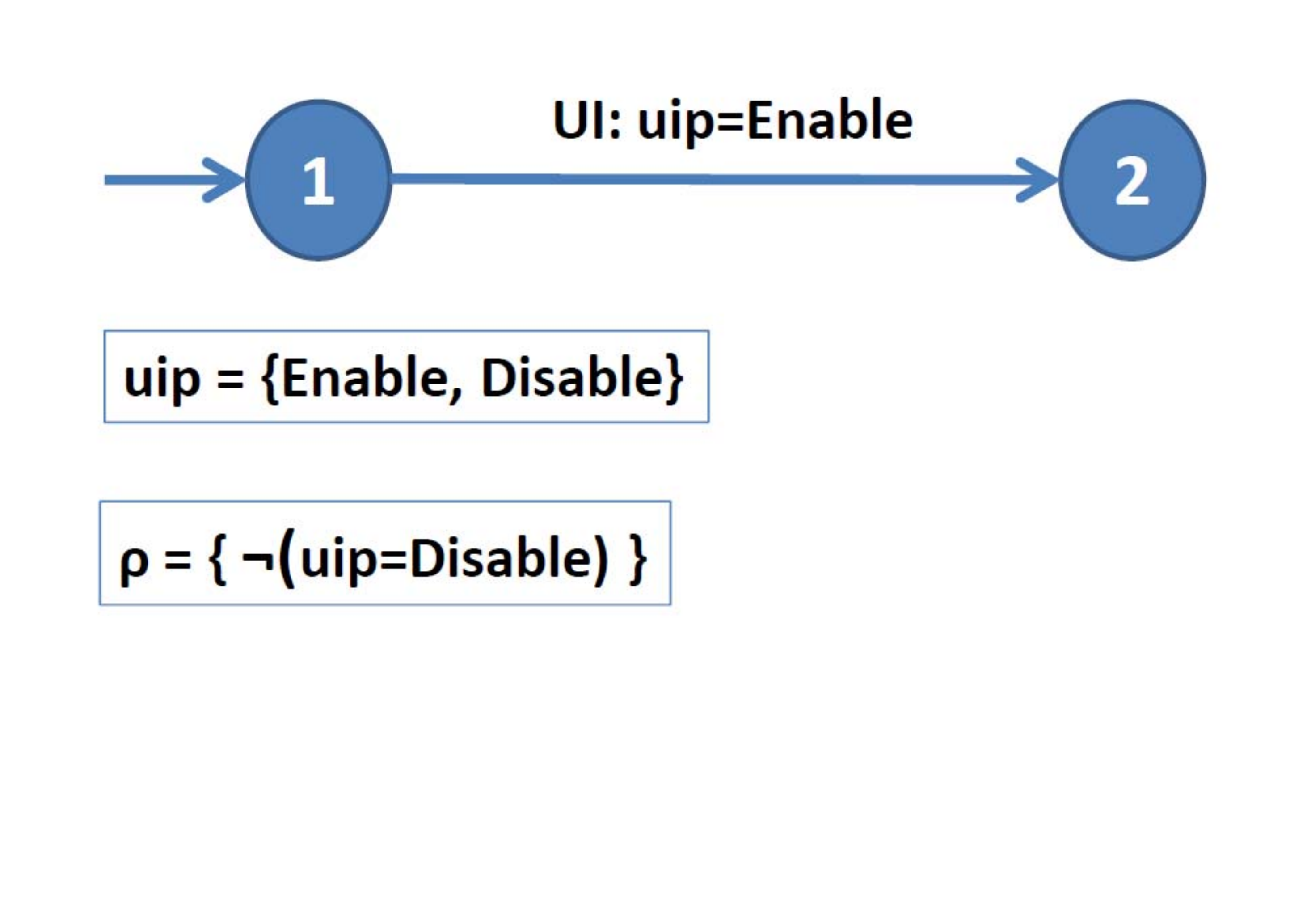}}
  \caption{FSMr for feature: $UserInterface$.} 
  \label{figure_ui_fsmr}
  \end{center}
\end{figure}

Figure \ref{figure_ui_fsmd} is the FSMd for feature $User$ $Interface$. 
This FSMd shares the event $UI$ with the FSMr and
has global predicate
 $\rho = \{ (type = 2D \vee type = 3D)\}$. 
There are two variables, $Var = \{type, graphics\}$, $type$ takes values from $\{2D, 3D\}$, while $graphics$ takes values from 
$\{Enable, Disable\}$.
\begin{figure}[htbp]
 \begin{center}
   {\includegraphics[scale=0.2]{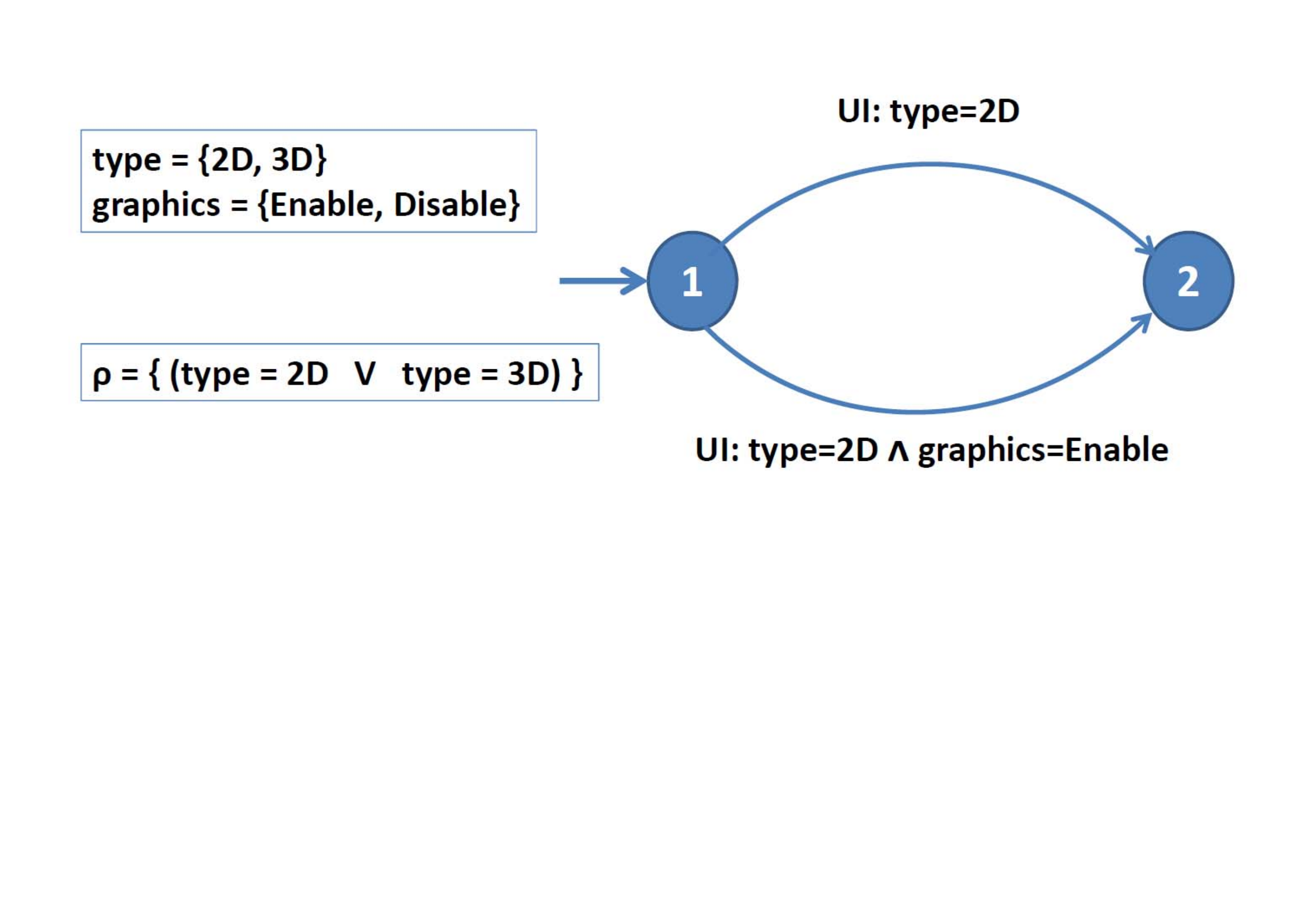}}
  \caption{FSMd for feature: $UserInterface$.} 
  \label{figure_ui_fsmd}
  \end{center}
\end{figure}

The analysis results for the two case studies are summarized in 
Figures \ref{expres} and \ref{bsplResultsTable} which gives the  
times taken by Algorithm 1. The number of product variants and the time taken
for Algorithm 1 are very small in both case studies. In the case of ECPL,  
a bug was found in the feature $Door~Lock$ \footnote{In $Des_{dl}$, the 
transition from the middle elliptical state to the round state labeled with 
$Poff: ShiftOutOfPark$ is incorrect; $\Phi(\tuple{Auto,Poff}) = \emptyset$. 
Removing this transition fixes the bug.}.
In this case, after fixing the bug, for the second step we used SPIN which took
11 seconds. 
For BSPL, the second step was performed using the QBF approach and CirQit 
took just 0.005 seconds. 
\begin{figure}[htbp]
	\centering
		\begin{tabular}{|c|c|c|c|}
			\hline {Sr. No.} & {Features}   &  Design Variants & 
SPIN Time(Sec)\\
			\hline 1 & UserInterface & 6 & 0.002 \\
			\hline 2 & CheckingBalance & 3 & 0.003 \\
			\hline 3 & WithdrawMoney & 8 & 0.027 \\
			\hline 4 & DepositMoney & 2 & 0.002 \\
			\hline 5 & PrintingStatement & 3 & 0.002  \\
			\hline 6 & Login & 1 & 0.001 \\
			\hline 7 & ATMLogin & 1 & 0.001  \\
			\hline 8 & ChangeAccountPassword & 2 & 0.003 \\
			\hline 9 & PayBills & 2  & 0.003  \\
			\hline 10 & PrintingBalanceAfterWithdraw & 2  & 0.003  \\
			\hline 11 & CheckingMoneyExchangeRate & 2 & 0.003  \\
			\hline 12 & MoneyExchange & 2  & 0.004  \\
			\hline 13 & InternationalTransfer & 2  & 0.006  \\
			\hline 14 & LocalTransferToOtherBank & 1  & 0.004  \\
			\hline 15 & LanguageSelection  & 2 & 0.001 \\
			\hline 16 & MobileTopUp  & 2 & 0.002  \\
			\hline 17 & ChangeMaxLimitForWithdrawal & 1  &  0.003 \\
			\hline 18 & LocalTransferToSameBank & 3 & 0.003  \\
			\hline 19 & AddBeneficiary  & 1  & 0.002  \\
			\hline 20 & RemoveBeneficiary  & 1  & 0.002  \\
			\hline 21 & CreateDemandDraft & 2  & 0.003 \\
			\hline 22 & ChequeClearance  & 1  & 0.003  \\
			\hline 23 & FastWithdrawal  & 1  & 0.002  \\
			\hline 24 & CreditCardPayment  & 2  & 0.002  \\
			\hline 25 & UpdateContactDetails  & 2  & 0.004  \\
			\hline
	\end{tabular}
\caption{Execution~time~of~FSMv-Verifier~on~Algorithm~1 for BSPL}
	\label{bsplResultsTable}
\end{figure}

\begin{figure}[htbp]
	\centering
		\begin{tabular}{|c|c|c|c|c|c|c|}
			\hline
			{Features}  & {\em PL} \& {\em LDCL}& {\em PCU}& {\em  DL}& {\em DU} & {\em AL} & {\em TSL}\\
			\hline
			{Design Variants}& 8 & 3 & 4 & 7 & 3 & 8 \\
			\hline
			{SPIN Time (Sec)}  & 0.436 & 0.031 & 0.046 & 0.109 & 0.015 & 0.218 \\
			\hline
	\end{tabular}
\caption{Execution~time~of~FSMv-Verifier~on~Algorithm~1 for ECPL}
	\label{expres}
\end{figure}

In the automotive domain, really very large SPLs are constructed~\cite{splc12}. Before undertaking the task of modeling such large examples, in order to 
quickly determine the scalability of our approach, we generated many random 
SPLs with 5000 to 25,000 features. Each of the corresponding FSMr/FSMd has two 
variables (four variants), and $3$ to  $8$ states. 
Similar to the ECPL and BSPL cases, SPIN took very little time (less than 0.5 seconds) for each (FSMr, FSMd) pair.  
The composite FSMr/FSMd, and hence the QBF formula $\Psi$ has then 10,000 to 
50,000 variables.  As we can see from Figure \ref{qbfScaleResults}, the 
the time taken for the largest example is 196.69 seconds which is quite 
efficient. Encouraged by this result, we plan to take up the large industrial 
case studies.

\begin{figure}[htbp]
	\centering
		\begin{tabular}{|c|c|c|c|c|c|}
		\hline 
Variables in FSMr/FSMd & 10000 & 20000 & 30000 & 40000 & 50000 \\
\hline
CirQit 3.1.7 time (Sec) & 4.47 & 25.77 & 65.67 & 119.49 & 196.69\\ 
			\hline
	\end{tabular}
\caption{Execution time of QBF for Scalability}
	\label{qbfScaleResults}
\end{figure}

\section{Conclusion}
\label{sec_conclu}
This paper motivated the need for extending the classical design verification
problem to evolving SPL in which features and variability information can be 
added incrementally. The novel aspects of the proposed work are: 
(i) it verifies that the variability at the design level conforms to that at 
the requirement level, (ii) it is compositional and (iii) it reduces
the conformance checking problem to QBF sat solving. 
A prototype tool has been implemented and experimented with modest sized 
examples with encouraging results. 


\bibliography{mainFSMv}

\begin{thebibliography}{10}

\bibitem{benavides10-is}
Benavides, D., Segura, S., Ruiz-Cortés, A.:
\newblock Automated analysis of feature models 20 years later: a literature
  review.
\newblock Information Systems \textbf{35}(6) (2010)

\bibitem{Classen2011}
Classen, A., Heymans, P., Schobbens, P.Y., Legay, A.:
\newblock Symbolic model checking of software product lines.
\newblock In: Proceedings of ICSE 2011. (2011)  321--330

\bibitem{Cordy2012}
Cordy, M., Classen, A., Perrouin, G., Heymans, P., Schobbens, P.Y., Legay, A.:
\newblock Simulation relation for software product lines: Foundations for
  scalable model checking (to appear).
\newblock In: Proceedings of ICSE 2012. (2012)

\bibitem{Czarnecki00}
Czarnecki, K., Eisenecker, U.:
\newblock Generative Programming: Methods, Tools, and Applications.
\newblock {Addison-Wesley Professional} (2000)

\bibitem{1248997}
Metzger, A., Pohl, K.:
\newblock Variability management in software product line engineering.
\newblock In: ICSE COMPANION '07: Companion to the proceedings of the 29th
  International Conference on Software Engineering. (2007)  186--187

\bibitem{spinbook}
Holzmann, G.J.:
\newblock The SPIN Model Checker: Primer and Reference Manual.
\newblock Addison-Wesley Professional (2003)

\bibitem{GoultiaevaB10}
Goultiaeva, A., Bacchus, F.:
\newblock Exploiting qbf duality on a circuit representation.
\newblock In: Proceedings of AAAI 2010. (2010)

\bibitem{HAL}
Narwane, G.K., Krishna, S.N., Millo, J.V., S.Ramesh:
\newblock Composotional verification of evolving software product lines.
\newblock In: http://www.cse.iitb.ac.in/$\sim$krishnas/TR2012.pdf. (2012)

\bibitem{LarsenNW07}
Larsen, K.G., Nyman, U., Wasowski, A.:
\newblock Modal i/o automata for interface and product line theories.
\newblock In: Proceedings of ESOP 2007. (2007)  64--79

\bibitem{Benveniste09}
Raclet, J.B., Caillaud, B., Badouel, E., Legay, A., Benveniste, A., Passerone,
  R.:
\newblock Modal interfaces: Unifying interface automata and modal
  specifications.
\newblock In: Proceedings of EMSOFT 2009. (2009)

\bibitem{FantechiG08}
Fantechi, A., Gnesi, S.:
\newblock Formal modeling for product families engineering.
\newblock In: Proceedings of SPLC 2008. (2008)  193--202

\bibitem{Scheidemann08b}
Gruler, A., Leucker, M., Scheidemann, K.D.:
\newblock Calculating and modeling common parts of software product lines.
\newblock In: Proceedings of SPLC 2008. (2008)  203--212

\bibitem{gomaa2008}
Gomaa, H., Olimpiew, E.:
\newblock Managing variability in reusable requirement models for software
  product lines.
\newblock In: High Confidence Software Reuse in Large Systems.
\newblock (2008)  182--185

\bibitem{Berg2005}
Berg, K., Bishop, J., Muthig, D.:
\newblock Tracing software product line variability: from problem to solution
  space.
\newblock In: Proceedings of SAICSIT '05. (2005)  182--191

\bibitem{Metzger2007}
Metzger, A., Pohl, K., Heymans, P., Schobbens, P.Y., Saval, G.:
\newblock Disambiguating the documentation of variability in software product
  lines: A separation of concerns, formalization and automated analysis.
\newblock In: Proceedings of RE 2007. (2007)  243--253

\bibitem{Riebisch2008}
Riebisch, M., Brcina, R.:
\newblock Optimizing design for variability using traceability links.
\newblock In: Proceedings of ECBS 2008. (2008)  235--244

\bibitem{Fisler2007}
Krishnamurthi, S., Fisler, K.:
\newblock Foundations of incremental aspect model-checking.
\newblock ACM Trans. Softw. Eng. Methodol. \textbf{16}(2) (2007) ~39

\bibitem{lutz2011}
Liu, J., Basu, S., Lutz, R.:
\newblock Compositional model checking of software product lines using
  variation point obligations.
\newblock Automated Software Engineering \textbf{18} (2011)  39--76

\bibitem{CordySPLC2012}
Cordy, M., Schobbens, P.Y., Heymans, P., Legay, A.:
\newblock Behavioural modelling and verification of real-time software product
  lines.
\newblock In: Proceedings of SPLC 2012. (2012)  66--75

\bibitem{lauenroth2011a}
Lauenroth, K., Metzger, A., Pohl, K.:
\newblock Quality assurance in the presence of variability.
\newblock Technical report, SSE, Institut fur Informatik und
  Wirtschaftsinformatik, univertitat Duisburg Essen (2011)

\bibitem{Scheidemann08a}
Gruler, A., Leucker, M., Scheidemann, K.D.:
\newblock Modeling and model checking software product lines.
\newblock In: Proceedings of FMOODS 2008. (2008)  113--131

\bibitem{vardi-wolper}
Vardi, M.Y., Wolper, P.:
\newblock An automata-theoretic approach to automatic program verification.
\newblock In: Proceedings of LICS 1986. (1986)  322--331

\bibitem{splc12}
Flores, R., Krueger, C., Clements, P.:
\newblock Mega-scale product line engineering at general motors.
\newblock In: Proceedings of SPLC 2012. (2012)  259--269

\end{thebibliography}

\end{document}